\documentclass[11pt]{article}

\usepackage[margin=1in]{geometry}
\usepackage{amsmath,amssymb,amsfonts,amsthm,mathtools}
\usepackage{bm}
\usepackage{booktabs}
\usepackage{tabularx}
\usepackage{array}
\usepackage{graphicx}
\usepackage{caption}
\usepackage{subcaption}
\usepackage{authblk}
\usepackage{enumitem}
\usepackage{hyperref}
\usepackage[nameinlink,capitalize]{cleveref}
\usepackage{setspace}
\usepackage{microtype}
\usepackage{color}
\usepackage{placeins}
\usepackage{titlesec}

\hypersetup{
    colorlinks=true,
    linkcolor=blue,
    citecolor=blue,
    urlcolor=blue
}

\newtheorem{proposition}{Proposition}

\newcommand{\E}{\mathbb{E}}
\newcommand{\Var}{\operatorname{Var}}

\newcommand{\Prob}{\mathbb{P}}
\newcommand{\dd}{\,\mathrm{d}}
\newcommand{\e}{\mathrm{e}}
\newcommand{\ind}{\mathbf{1}}
\newcommand{\Treb}{T_{\rm reb}}
\newcommand{\Vdet}{V_{\rm det}}
\newcommand{\td}{\tau_{\rm d}}
\newcommand{\tecl}{\tau_{\rm e}}
\newcommand{\tdelay}{\tau_{\rm det}}
\newcommand{\tw}{t_w}
\newcommand{\Lcal}{\mathcal{L}}

\title{\bfseries Stochastic First-Passage Theory of HIV Viral Rebound Following Latent Reservoir Reactivation}

\author[1]{Mesfin Asfaw Taye\thanks{Email: tayem@wlac.edu}}
\affil[1]{West Los Angeles College, Science Division, 9000 Overland Ave, Culver City, CA 90230, USA}

\date{\today}

\begin{document}
\maketitle


\begin{abstract}
In our earlier work, we modeled the stochastic initiation of HIV rebound by treating latent-cell reactivation as a Poisson-driven process during antiretroviral-therapy (ART) washout, immune modulation, and therapeutic perturbation~\cite{Taye2025CM}. That framework characterized activation survival, cumulative hazards, waiting-time laws, and expected viral-load trajectories. However, the endpoint observed in analytical treatment interruption (ATI) studies is not the hidden time of first successful reactivation. It is the first time at which plasma virus exceeds an assay-defined detection threshold. Here we reformulate post-treatment rebound as a stochastic first-passage problem, with $T_{\rm reb}=\inf\{t\ge t_w:V(t)\ge V_{\rm det}\}$. Successful reactivation events arrive with a time-dependent intensity, and each event seeds an exponentially expanding viral lineage. The total plasma viral load is therefore a Poisson shot-noise process, and rebound corresponds to its first threshold crossing. In the rare-reactivation regime, this crossing is dominated by the earliest successful lineage. Rebound timing then separates into two components: a stochastic waiting time for reservoir reactivation and a deterministic growth delay to detectability. This separation gives a shifted-hazard survival law and yields closed-form rebound-time distributions for constant activation, ART-washout-dependent activation, immune-periodic activation, Cox-process activation, and heterogeneous-reservoir activation. The same formulation also provides a likelihood suitable for the interval-censored sampling structure of ATI trials. A central prediction is that observed rebound times depend logarithmically on the detection threshold. Threshold-resolved medians from a large ATI meta-analysis, approximately $16$, $21$, and $32$ days at thresholds of $50$, $400$, and $10{,}000$ copies/mL, support this dependence. They imply an effective net growth rate of about $0.33\,{\rm day}^{-1}$, below the nominal maximal early-infection value. This calibration explains why maximal-growth parameters underpredict observed rebound times and links latent-reservoir reactivation directly to the measurable ATI rebound endpoint.
\end{abstract}

\noindent\textbf{Keywords:} HIV rebound; latent reservoir; first-passage time; stochastic reactivation; Poisson process; viral load; ART interruption; detection threshold; shock-and-kill.

\section{Introduction}

Combination antiretroviral therapy (ART) has transformed HIV-1 infection from a rapidly fatal disease into a clinically manageable chronic condition. ART suppresses viral replication, restores immune function, and delays disease progression~\cite{Perelson1993,Perelson1996,Ho1995,Wei1995}. ART, however, is suppressive rather than curative. A long-lived reservoir of latently infected quiescent CD4$^+$ T cells is established early and persists through transcriptional silence, immune evasion, homeostatic proliferation, and clonal expansion~\cite{Chun1997,Siliciano2003,Murray2016,Li2020}. Because this reservoir is largely inaccessible to therapies that target active viral replication, treatment interruption can reseed productive infection and produce plasma viral rebound within days to weeks. Accurate prediction of rebound time is therefore central to the design and evaluation of curative strategies.

Viral rebound is not a deterministic resumption of viral replication, but a stochastic process initiated by rare and heterogeneous reservoir-reactivation events within a small reservoir. A latent cell must reactivate, the emerging lineage must avoid early stochastic extinction, and the established lineage must expand to a detectable plasma level. These steps are shaped by reservoir size, activation state, immune surveillance, residual drug activity, anatomical compartmentalization, and intrinsic small-number fluctuations~\cite{Hill2014,Hill2018,pinkevych2015latency,conway2019rebound}. Deterministic ordinary differential equation models remain essential for population-level virus--cell dynamics~\cite{NowakBangham1996,Perelson2002,Wodarz2002}. Yet analytical treatment interruption (ATI) studies require probabilistic quantities: rebound hazards, waiting-time distributions, and the probability that an individual remains below a specified viral-load threshold at a given time.

The sampling design of ATI studies makes this distinction central to the analysis. The microscopic time of successful reservoir reactivation is not measured directly. Plasma HIV RNA is sampled at discrete visits, and rebound is assigned to the first measurement above an assay-defined threshold. Thus, the clinical rebound endpoint is not the activation time itself, but an interval-censored threshold-crossing time. This distinction is supported by threshold-resolved ATI data. In an individual-participant meta-analysis of 24 prospective ATI studies, Gunst et al.\ reported median times of approximately $16$, $21$, and $32$ days to exceed $50$, $400$, and $10{,}000$ HIV RNA copies/mL, respectively~\cite{gunst2025ati}. Such systematic shifts with $V_{\rm det}$ are naturally explained by stochastic reactivation followed by a growth delay to detectability. They cannot be captured by a theory that identifies rebound with the first activation event alone. Additional evidence points in the same direction: successful initiation appears intermittent, with reactivation events occurring on average every five to eight days after interruption~\cite{pinkevych2015latency}; genetically barcoded SIV studies show that rebound can be seeded by multiple independent lineages~\cite{fennessey2017barcoded,VanDorp2020}; and reservoir measures, including on-ART HIV expression and cell-associated RNA, correlate with rebound timing~\cite{li2016reservoir,pasternak2020carna,sneller2020kinetics}. These observations motivate a stochastic theory that includes activation, establishment, expansion, and delayed detectability~\cite{conway2019rebound}.

In our earlier work, we developed a stochastic model of HIV reactivation under ART washout and immune fluctuations, treating latency reversal as a Poisson-driven process with time-dependent intensity, pharmacokinetic decay, immune variability, and therapeutic perturbation protocols~\cite{Taye2025CM}. That framework described the hidden activation process through cumulative hazards, activation-survival probabilities, waiting-time laws, and ensemble-averaged viral-load trajectories. It did not directly describe the endpoint measured in ATI studies. Activation survival gives the time to successful reactivation, while the mean viral load gives an ensemble average. Neither quantity is equivalent to the first time at which plasma viral load in an individual participant crosses a prescribed detection threshold. This distinction is especially important when reactivation is rare, because the mean viral load may be dominated by a small number of early expanding trajectories and may therefore misrepresent the typical rebound time.

In this work, we close this gap by formulating post-treatment HIV rebound as a stochastic first-passage problem. The rebound time is defined as the first crossing of a detection threshold by an individual, randomly evolving viral load. Successful reactivation events arrive with intensity $\lambda(t)$, and each event seeds a lineage that grows after an eclipse phase. The resulting viral load is a Poisson shot-noise process, and rebound is the first threshold crossing of this process. In the rare-reactivation regime, the crossing is dominated by the earliest successful expanding lineage. Rebound timing then separates into a stochastic waiting time for successful reservoir reactivation and a deterministic growth delay to detectability. For a constant successful-reactivation rate, this gives the compact approximation
\begin{equation}
\E[T_{\rm reb}]
\approx
\tw+\tau_e+\frac{1}{\lambda}
+\frac{1}{r}\log\!\left(\frac{V_{\rm det}}{v_0}\right),
\label{eq:constant_mean_intro}
\end{equation}
where $\tw$ is the washout time, $\tau_e$ is the eclipse delay, $\lambda$ is the successful-reactivation rate, $r$ is the net viral growth rate, $v_0$ is the founder contribution to plasma virus, and $V_{\rm det}$ is the assay threshold. The logarithmic dependence on $V_{\rm det}$ explains why increasing the detection threshold shifts rebound times additively rather than multiplicatively. It also separates the principal biological mechanisms by which interventions may delay rebound: reducing successful reactivation, lowering establishment probability, or slowing post-reactivation viral growth.

The contribution of this paper is threefold. First, it recasts post-treatment HIV rebound as a first-passage problem for a stochastic viral-load process rather than as activation survival alone. Second, it derives shifted-hazard rebound-time laws for general time-dependent successful-reactivation intensities, with explicit forms for constant activation, ART-washout-dependent activation, immune-periodic activation, Cox-process activation, and heterogeneous reservoir classes. Third, it connects the theory directly to ATI data by incorporating detection thresholds, interval censoring, and physiologically interpretable parameters. The threshold-resolved medians reported by Gunst et al.\ are shown to be consistent with an effective net growth rate below the maximal early-infection value. The aim is not to replace mechanistic virus--cell models, but to provide an analytically transparent theory of the quantity ATI studies actually measure: the first crossing of a viral-load detection threshold.

The remainder of the paper is organized as follows. Section~\ref{sec:old_new} distinguishes first-passage rebound from activation-survival and mean-load descriptions. Section~\ref{sec:model} develops the successful-reactivation and viral-expansion framework. Section~\ref{sec:exact_fp} establishes the exact first-passage relation and the single-founder approximation. Sections~\ref{sec:constant}--\ref{sec:cox} derive predictions for constant, washout-dependent, periodic, and stochastic activation profiles, while Section~\ref{sec:heterogeneous} treats competing reservoir classes. Section~\ref{sec:validation_inference_interpretation} connects the model to threshold-dependent ATI observations, interval-censored likelihood inference, and biological interpretation. Detailed derivations are collected in the appendices.

\section{First-Passage Formulation and Empirical Basis}
\label{sec:old_new}

The central quantity analyzed in this work is the first time at which an individual stochastic viral-load trajectory becomes clinically detectable. We define the rebound time as 
\begin{equation}
\Treb=\inf\{t\ge \tw:V(t)\ge \Vdet\},
\label{eq:Treb_firstpassage}
\end{equation}
where \(\tw\) denotes treatment interruption or effective washout, \(V(t)\) is the total viral load generated by successful reactivation events, and \(\Vdet\) is the assay threshold used to define rebound. Accordingly, rebound cannot be identified either with the unobserved time of first latent-cell reactivation or with the deterministic threshold crossing of an ensemble-averaged viral load. It is defined as a first-passage event: the first time a single stochastic viral-load trajectory reaches a prescribed detection boundary. Figure~\ref{fig:first_passage_schematic} illustrates this formulation and the interval-censored sampling structure of ATI data.

\begin{figure}[tbp]
    \centering
    \includegraphics[width=\linewidth]{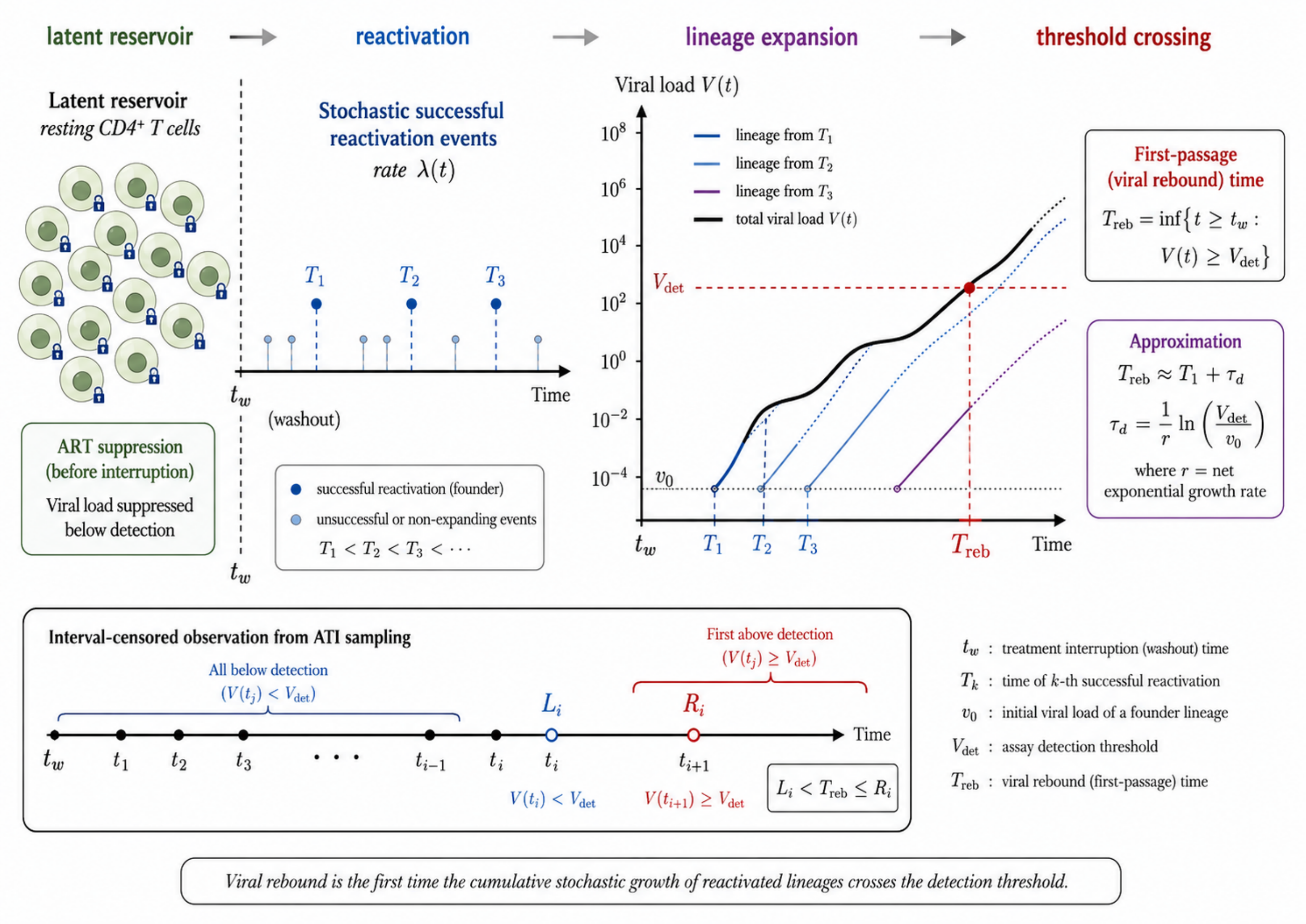}
    \caption{
    First-passage formulation of HIV viral rebound after ART interruption. Before interruption, the latent reservoir persists under ART suppression and plasma virus remains below the detection threshold. After washout at \(t_w\), successful reactivation events occur with time-dependent intensity \(\lambda(t)\). Each event at time \(T_i\) seeds a lineage contributing \(v_0 e^{r(t-T_i)}\), so the total viral load is \(V(t)=v_0\sum_{T_i\le t}e^{r(t-T_i)}\). Rebound is the first crossing \(T_{\rm reb}=\inf\{t\ge t_w:V(t)\ge V_{\rm det}\}\). The lower panel indicates the interval-censored structure of ATI sampling, where \(L_i<T_{{\rm reb},i}\le R_i\), and the single-founder approximation \(T_{\rm reb}\approx T_1+\tau_d\), with \(\tau_d=r^{-1}\ln(V_{\rm det}/v_0)\).
    }
    \label{fig:first_passage_schematic}
\end{figure}

To distinguish observed rebound from the hidden reactivation process, let \(N(t)\) denote the counting process of successful established reactivation events after washout, with time-dependent intensity \(\lambda(t)\). For a given probability distribution,
\[
P(n,t)=\Prob\{N(t)-N(\tw)=n\},
\]
the corresponding inhomogeneous Poisson dynamics is 
\begin{equation}
\frac{\dd P(n,t)}{\dd t}
=
\lambda(t)\left[P(n-1,t)-P(n,t)\right],
\qquad n\ge1,
\label{eq:master_intro}
\end{equation}
with \(\dd P(0,t)/\dd t=-\lambda(t)P(0,t)\) and \(P(n,\tw)=\delta_{n,0}\). Defining the cumulative successful-reactivation hazard
\begin{equation}
\Lambda(t)=\int_{\tw}^{t}\lambda(s)\dd s,
\label{eq:activation_hazard_section}
\end{equation}
one obtains
\begin{equation}
P(n,t)=\frac{\Lambda(t)^n}{n!}\e^{-\Lambda(t)},
\label{eq:poisson_solution_intro}
\end{equation}
and hence the activation-survival probability
\begin{equation}
S_{\rm act}(t)
=
\Prob\{N(t)-N(\tw)=0\}
=
\e^{-\Lambda(t)}.
\label{eq:Sact_section}
\end{equation}
This survival probability belongs to the hidden reactivation process, not to the clinical rebound endpoint. It quantifies the absence of successful reactivation, rather than the probability that plasma viremia remains below a prescribed detection threshold.

A second natural but distinct quantity is the ensemble-mean viral load. If a successful event at time \(s\) contributes \(v_0\) and subsequently grows with net rate \(r=g-c\), then
\begin{equation}
\bar V(t)=\E[V(t)]
=
v_0\int_{\tw}^{t}\lambda(s)\e^{r(t-s)}\dd s .
\label{eq:mean_load_section}
\end{equation}
This average is analytically useful, but it is not the observed rebound time in an ATI participant, who follows one stochastic trajectory. The clinically relevant survival function is instead
\begin{equation}
S_{\rm reb}(t)
=
\Prob(\Treb>t)
=
\Prob\left\{
\sup_{\tw\le u\le t}V(u)<\Vdet
\right\}.
\label{eq:Sreb_sup_section}
\end{equation}
When \(r>0\) and all established lineages grow monotonically, \(V(t)\) is nondecreasing, so the supremum is attained at \(t\) and
\begin{equation}
S_{\rm reb}(t)=\Prob\{V(t)<\Vdet\}.
\label{eq:Sreb_monotone_section}
\end{equation}
Thus, the rebound problem reduces to the distributional first-passage problem of a stochastic viral load.

The three quantities
\begin{equation}
S_{\rm act}(t),\qquad \bar V(t),\qquad S_{\rm reb}(t)
\end{equation}
are therefore generally distinct:
\begin{equation}
S_{\rm reb}(t)\ne S_{\rm act}(t),
\qquad
\Treb \ne T_1,
\qquad
\Treb \ne \inf\{t:\E[V(t)]\ge \Vdet\}.
\label{eq:three_distinctions}
\end{equation}
The first inequality reflects the growth delay between successful reactivation and detectability. The second states that the first established lineage is not observed at activation but only after sufficient expansion. The third states that the threshold crossing of an ensemble mean is not the first passage of a random trajectory. This last distinction is especially important in the rare-reactivation regime, where the ensemble mean may be dominated by a minority of early high-growth trajectories while most individuals remain below detection.

The first-passage formulation also matches the observational structure of ATI studies. Plasma HIV RNA is sampled at discrete times after ART interruption, and rebound is recorded only when the measured viral load first exceeds a specified threshold. If \(L_i\) is the last sampling time at which participant \(i\) remains below threshold and \(R_i\) is the first above-threshold sampling time, then
\begin{equation}
L_i<T_{{\rm reb},i}\le R_i .
\label{eq:interval_censoring_motiv}
\end{equation}
Thus, rebound times are naturally interval-censored threshold-crossing events rather than exactly observed activation times.

The threshold dependence of the Gunst et al.\ medians~\cite{gunst2025ati} is a direct quantitative signature of this structure. If a lineage grows at net rate \(r\), raising the detection threshold from \(V_1\) to \(V_2\) shifts the detection time by
\begin{equation}
\Delta \tau_d
=
\frac{1}{r}\log\left(\frac{V_2}{V_1}\right).
\label{eq:threshold_shift}
\end{equation}
For \(r\approx0.69\,{\rm day}^{-1}\), the predicted delays for \(50\to400\) and \(400\to10{,}000\) copies/mL are approximately \(3.0\) and \(4.7\) days, respectively---capturing the logarithmic dependence and the correct order of magnitude, although the larger observed shift at high threshold may reflect slower effective growth, heterogeneity, or sampling effects. The reactivation process itself is intermittent: the once-every-\(5\)--\(8\)-day estimate of Pinkevych et al.~\cite{pinkevych2015latency} corresponds to an effective successful-reactivation rate \(\lambda\approx0.13\)--\(0.20\,{\rm day}^{-1}\). These observations motivate modeling \(\lambda(t)\) as a biological intensity shaped by reservoir size, reservoir activity, immune state, drug history, and intervention.

The central approximation of the paper separates the observed rebound time into a stochastic waiting time and a deterministic detection delay. If the first successful expanding lineage is established at time \(T_1\), begins with output \(v_0\), and grows with net rate \(r>0\), then detection occurs when
\begin{equation}
v_0\e^{r(t-T_1)}=\Vdet .
\end{equation}
The growth-to-detection delay is therefore
\begin{equation}
\td=
\frac{1}{r}\log\left(\frac{\Vdet}{v_0}\right).
\label{eq:td_section}
\end{equation}
Including a short eclipse phase \(\tecl\) before measurable shedding gives the total delay
\begin{equation}
\tdelay=\tecl+\td,
\label{eq:delay_total_section}
\end{equation}
and the single-founder decomposition
\begin{equation}
\Treb\approx T_1+\tdelay .
\label{eq:decomp_intro}
\end{equation}
Thus, \(T_1\) is controlled by the successful-reactivation hazard \(\lambda(t)\), whereas \(\tdelay\) is controlled by eclipse timing, net growth \(r\), founder output \(v_0\), and threshold \(\Vdet\).

This decomposition converts any mechanistic model for \(\lambda(t)\) into a rebound-time distribution. For a general time-dependent successful-reactivation rate,
\begin{equation}
S_{\rm reb}(t)
\approx
\begin{cases}
1, & t<\tw+\tdelay,\\[5pt]
\displaystyle
\exp\left[
-\int_{\tw}^{t-\tdelay}\lambda(s)\dd s
\right], & t\ge \tw+\tdelay .
\end{cases}
\label{eq:Sreb_shifted_general}
\end{equation}
Equivalently, the observed rebound clock is a delayed version of the hidden reactivation clock: rebound at time \(t\) is driven by successful reactivation near \(t-\tdelay\). If
\begin{equation}
H(y)=\int_{0}^{y}\lambda(\tw+s)\dd s ,
\end{equation}
then the mean waiting time to the first successful lineage is
\begin{equation}
\E[T_1-\tw]=\int_{0}^{\infty}\e^{-H(y)}\dd y ,
\label{eq:mean_waiting_general}
\end{equation}
and hence
\begin{equation}
\E[\Treb]
\approx
\tw+\tdelay+
\int_{0}^{\infty}
\exp\left[
-\int_{0}^{y}\lambda(\tw+s)\dd s
\right]\dd y .
\label{eq:mean_rebound_general_section}
\end{equation}
For constant \(\lambda(t)=\lambda\),
\begin{equation}
\E[\Treb]
\approx
\tw+\tecl+\frac{1}{\lambda}
+
\frac{1}{r}\log\left(\frac{\Vdet}{v_0}\right).
\label{eq:mean_constant_section}
\end{equation}
The same first-passage structure therefore applies across constant, pharmacokinetic, periodic, stochastic, and heterogeneous activation models.

\section{Model Formulation and Viral-Load Distribution}
\label{sec:model}

We now specify the stochastic viral-load model and derive its distributional structure. Successful events are modeled as an inhomogeneous Poisson process \(N(t)\), \(t\ge\tw\), with intensity \(\lambda(t)\ge0\). Throughout, \(\lambda(t)\) denotes the rate of successful established reactivation rather than mere transcriptional activation. If \(\lambda_{\rm act}(t)\) is the raw activation rate and \(p_{\rm est}(t)\) is the probability that an activated lineage establishes rather than undergoes early extinction, then
\begin{equation}
\lambda(t)=p_{\rm est}(t)\lambda_{\rm act}(t).
\label{eq:success_rate}
\end{equation}
Thus, immune clearance and early stochastic extinction enter through \(p_{\rm est}\) without changing the first-passage structure.

Each successful event at time \(T_i\) seeds a lineage with initial contribution \(v_0>0\) and net exponential growth rate \(r=g-c\), where \(g\) is intrinsic expansion and \(c\) is effective clearance or immune-mediated loss. For \(r>0\), the total viral load is the Poisson shot-noise process
\begin{equation}
V(t)
=
v_0\sum_{T_i\le t}\e^{r(t-T_i)}
=
\int_{\tw}^{t}v_0\e^{r(t-s)}N(\dd s).
\label{eq:V_sum}
\end{equation}
The rebound time is then the first passage of this process across \(\Vdet\), as defined in Eq.~\eqref{eq:Treb_firstpassage}.

A newly established lineage need not shed measurable plasma RNA immediately. We represent the eclipse phase by shifting the lineage contribution so that a founder established at \(s\) contributes \(v_0\e^{r(t-s-\tecl)}\) for \(t\ge s+\tecl\) and zero before. The total deterministic delay is therefore
\begin{equation}
\tdelay=\tecl+\frac{1}{r}\log\left(\frac{\Vdet}{v_0}\right).
\label{eq:tdelay_eclipse}
\end{equation}
In the technical derivations below we write formulas using \(\td\) alone; the eclipse is restored by the substitution
\begin{equation}
\td\longrightarrow \tdelay=\tecl+\td .
\label{eq:eclipse_substitution}
\end{equation}
This substitution shifts all means, quantiles, and survival curves later by \(\tecl\) without changing their shape.

\subsection{Physiological parameter values}
\label{sec:phys_params}

The numerical examples use the physiological parameter set in Table~\ref{tab:phys_params}. The successful-reactivation rate is taken from recurrent-reactivation estimates in HIV and SIV treatment-interruption studies~\cite{Pinkevych2019,Wu2020}. The early rebound growth rate is constrained by classical viral-dynamics measurements of virion clearance and infected-cell turnover~\cite{Perelson1996}. The founder output \(v_0\) is treated as an effective plasma-equivalent scale for one established lineage; because rebound time depends on \(v_0\) only logarithmically, order-of-magnitude uncertainty in \(v_0\) produces modest changes in predicted times~\cite{VanDorp2020}. The detection threshold is assay-dependent, while the washout parameters \(A_0\) and \(k_{\rm drug}\) represent residual drug suppression and pharmacokinetic decay.

\begin{table}[ht]
\centering
\caption{Physiological parameter values used in the rebound-time formulas. Ranges are drawn from HIV/SIV treatment-interruption and viral-dynamics studies~\cite{Perelson1996,Pinkevych2019,Wu2020,VanDorp2020,Hill2018}. The baseline column gives the representative value used in worked examples.}
\label{tab:phys_params}
\begin{tabularx}{\textwidth}{>{\raggedright\arraybackslash}p{0.30\textwidth}c c X}
\toprule
\textbf{Parameter} & \textbf{Symbol} & \textbf{Baseline} & \textbf{Physiological range / source} \\
\midrule
Successful reactivation rate & \(\lambda,\lambda_0\) & \(0.30\,\mathrm{day}^{-1}\) & \(0.17\)--\(0.54\) (HIV); \(0.5\)--\(2.1\) (SIV)~\cite{Pinkevych2019,Wu2020} \\
Net post-reactivation growth rate & \(r=g-c\) & \(0.69\,\mathrm{day}^{-1}\) & \(0.4\)--\(1.5\) (doubling time \(0.5\)--\(1.7\) d)~\cite{Perelson1996} \\
Per-founder initial output & \(v_0\) & \(1\,\mathrm{cp/mL}\) & \(0.1\)--\(10\) (effective establishment scale)~\cite{VanDorp2020} \\
Detection threshold & \(\Vdet\) & \(50\,\mathrm{cp/mL}\) & \(20\)--\(200\) (assay-dependent) \\
Growth delay & \(\td=r^{-1}\log(\Vdet/v_0)\) & \(5.7\,\mathrm{day}\) & derived \\
Eclipse phase & \(\tecl\) & \(1\,\mathrm{day}\) & \(0.5\)--\(2\) (RNA-appearance lag)~\cite{Perelson1996} \\
Initial ART suppression strength & \(A_0\) & \(1.0\) & \(0\)--\(1\) \\
Drug elimination rate & \(k_{\rm drug}\) & \(0.3\,\mathrm{day}^{-1}\) & \(0.1\)--\(0.7\)~\cite{Hill2018} \\
Time of ART interruption & \(\tw\) & \(0\,\mathrm{day}\) & reference origin \\
\bottomrule
\end{tabularx}
\end{table}

At baseline,
\begin{equation}
\td
=
\frac{1}{0.69}\log\left(\frac{50}{1}\right)
\approx 5.7\ {\rm days}.
\label{eq:tau_d_numeric}
\end{equation}
Together with \(\tecl\), this sets the deterministic floor below which detectable rebound is unlikely.

\subsection{Laplace transform and cumulants}

The distribution of \(V(t)\) is generally not elementary, but its Laplace transform is closed form because \(V(t)\) is a linear functional of a Poisson random measure.

\begin{proposition}[Laplace transform of the stochastic viral load]
For \(q\ge0\),
\begin{equation}
\Phi_t(q):=\E[\e^{-qV(t)}]
=
\exp\left\{
-\int_{\tw}^{t}\lambda(s)
\left[
1-\exp\left(-qv_0\e^{r(t-s)}\right)
\right]\dd s
\right\}.
\label{eq:laplace_general}
\end{equation}
\end{proposition}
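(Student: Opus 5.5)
The plan is to recognize $V(t)=\int_{\tw}^{t} f_t(s)\,N(\dd s)$, with $f_t(s)=v_0\e^{r(t-s)}$, as a linear functional of the inhomogeneous Poisson random measure $N$ with intensity $\lambda(s)\dd s$. The identity~\eqref{eq:laplace_general} is then Campbell's (exponential) formula, $\E[\e^{-\int f\,\dd N}]=\exp\{-\int(1-\e^{-f})\lambda\,\dd s\}$, applied with $f=qf_t\ge0$. To keep the argument self-contained and tied to the master equation~\eqref{eq:master_intro}, I would derive this formula directly by conditioning on the number of events rather than quoting it.

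The first step is to condition on $N(t)-N(\tw)=n$. This has the Poisson law~\eqref{eq:poisson_solution_intro} with mean $\Lambda(t)$. Given that count, the event times are distributed as $n$ i.i.d.\ points on $[\tw,t]$ with density $\lambda(s)/\Lambda(t)$; this is the standard order-statistics property of the inhomogeneous Poisson process. Since $V(t)$ is a sum over these points, the conditional Laplace transform factorizes as
\[
\E[\e^{-qV(t)}\mid n]=\left(\frac{1}{\Lambda(t)}\int_{\tw}^{t}\lambda(s)\e^{-qv_0\e^{r(t-s)}}\dd s\right)^{n}=:\left(\frac{J(t)}{\Lambda(t)}\right)^{n}.
\]

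The second step is to average over $n$:
\[
\sum_n\frac{\Lambda^n}{n!}\e^{-\Lambda}\left(\frac{J}{\Lambda}\right)^{n}=\e^{-\Lambda(t)+J(t)}=\exp\left\{-\int_{\tw}^{t}\lambda(s)\left[1-\e^{-qv_0\e^{r(t-s)}}\right]\dd s\right\},
\]
which is~\eqref{eq:laplace_general}. The degenerate case $\Lambda(t)=0$ is trivial, since then $V(t)=0$ almost surely and both sides equal $1$.

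The main obstacle is justifying the conditional i.i.d.\ representation, because the paper only states the counting master equation. I would handle it with a partition argument. Split $[\tw,t]$ into small intervals $I_k$ with independent Poisson increments $N(I_k)$, where $N(I_k)$ has mean $\int_{I_k}\lambda$, and approximate $f_t$ by a step function that is constant on each $I_k$. Independence and the Poisson generating function $\E[z^{N(I_k)}]=\exp\{(z-1)\int_{I_k}\lambda\}$, evaluated at $z=\e^{-qf_t(s_k)}$, then give the product formula exactly for step functions. Passing to the limit uses the continuity of $f_t$ on the compact interval, with dominated convergence: the integrand $\e^{-qV}\le1$ because $q\ge0$. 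I regard this route as cleaner than the order-statistics route and would use it as the primary proof. Finiteness of $\Lambda(t)$, implicit in~\eqref{eq:activation_hazard_section}, ensures all integrals converge.
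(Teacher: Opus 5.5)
Your proposal is correct and takes essentially the paper's route. The paper's proof applies the Poisson Laplace functional (Campbell's formula), cited from Kingman and Feller, to $h(s)=qv_0e^{r(t-s)}\mathbf{1}_{\{t_w\le s\le t\}}$, and Appendix~B sketches the same partition into independent Poisson increments with a continuum limit that you make rigorous using step functions and bounded/dominated convergence; your conditioning-on-$N(t)-N(t_w)=n$ derivation via the order-statistics property is an equally valid self-contained way to obtain the same formula.
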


\begin{proof}
The Laplace functional of a Poisson random measure gives, for nonnegative \(h\),
\[
\E\left[\exp\left(-\int h(s)N(\dd s)\right)\right]
=
\exp\left[-\int(1-\e^{-h(s)})\lambda(s)\dd s\right]
\]
\cite{Kingman1993,Feller1971}. Taking \(h(s)=qv_0\e^{r(t-s)}\ind_{\{\tw\le s\le t\}}\) yields Eq.~\eqref{eq:laplace_general}.
\end{proof}

The distribution function is obtained by Laplace inversion,
\begin{equation}
F_{V(t)}(x)
=
\Prob(V(t)\le x)
=
\Lcal^{-1}\!\left[\frac{\Phi_t(q)}{q}\right](x),
\label{eq:laplace_inversion}
\end{equation}
and the cumulants follow from \(\log\Phi_t(q)\):
\begin{equation}
\kappa_m(t)
=
v_0^m
\int_{\tw}^{t}
\lambda(s)\e^{mr(t-s)}\dd s,
\qquad m=1,2,\ldots .
\label{eq:cumulants_general}
\end{equation}
In particular,
\begin{equation}
\E[V(t)]
=
v_0\int_{\tw}^{t}\lambda(s)\e^{r(t-s)}\dd s,
\qquad
\Var[V(t)]
=
v_0^2\int_{\tw}^{t}\lambda(s)\e^{2r(t-s)}\dd s .
\label{eq:mean_var_general}
\end{equation}
For constant \(\lambda(t)=\lambda\), with \(\Delta=t-\tw\),
\begin{equation}
\E[V(t)]
=
\frac{v_0\lambda}{r}\left(\e^{r\Delta}-1\right),
\qquad
\Var[V(t)]
=
\frac{v_0^2\lambda}{2r}\left(\e^{2r\Delta}-1\right),
\label{eq:mean_var_constant}
\end{equation}
and
\begin{equation}
{\rm CV}^2(t)
=
\frac{\Var[V(t)]}{\E[V(t)]^2}
=
\frac{r}{2\lambda}
\frac{\e^{2r\Delta}-1}{(\e^{r\Delta}-1)^2}
\longrightarrow
\frac{r}{2\lambda}
\qquad
(\Delta\to\infty).
\label{eq:cv2}
\end{equation}
The nonzero late-time limit shows that, when successful reactivation is rare relative to viral expansion, individual trajectories remain highly dispersed and cannot be represented by a single deterministic mean trajectory.

For constant \(\lambda\), the shot-noise structure can also be written explicitly as a compound-Poisson variable. Conditional on \(N(t)=n\), the event times are independent and uniform on \([\tw,t]\). Thus each contribution has the form \(Y=v_0\e^{r(t-U)}\), with \(U\sim{\rm Uniform}(\tw,t)\), so \(Y\in[v_0,v_0\e^{r\Delta}]\) and \(f_Y(y)=1/(r\Delta y)\). Hence
\begin{equation}
V(t)\stackrel{d}{=}\sum_{i=1}^{N_\Delta}Y_i,
\qquad
N_\Delta\sim{\rm Poisson}(\lambda\Delta),
\label{eq:compound_poisson}
\end{equation}
where the \(Y_i\) are independent copies of \(Y\). This representation is used in the first-passage analysis of Appendix~\ref{app:first_passage_bounds}; representative realizations are shown in Figure~\ref{fig:schematic}.

Figure~\ref{fig:schematic} shows five independent realizations of the shot-noise viral load of Eq.~\eqref{eq:V_sum}, simulated at the baseline parameters of Table~\ref{tab:phys_params} (\(\lambda=0.30\,{\rm day}^{-1}\), \(r=0.69\,{\rm day}^{-1}\), \(v_0=1\), \(V_{\rm det}=50\)). Reactivation times are drawn from a homogeneous Poisson process and each founder lineage is propagated exactly, so that the plotted curves are sample paths of the exact process rather than of the single-founder approximation. The filled circle on each curve marks its first-passage time \(T_{\rm reb}=\inf\{t:V(t)\ge V_{\rm det}\}\). Although the trajectories share identical parameters, they cross the threshold across a wide spread of times; this is the visual counterpart of the persistent late-time dispersion \({\rm CV}^2\to r/2\lambda\) of Eq.~\eqref{eq:cv2}, and it shows directly why rebound timing cannot be reduced to a single deterministic mean trajectory.

\begin{figure}[tbp]
\centering
\includegraphics[width=0.78\linewidth]{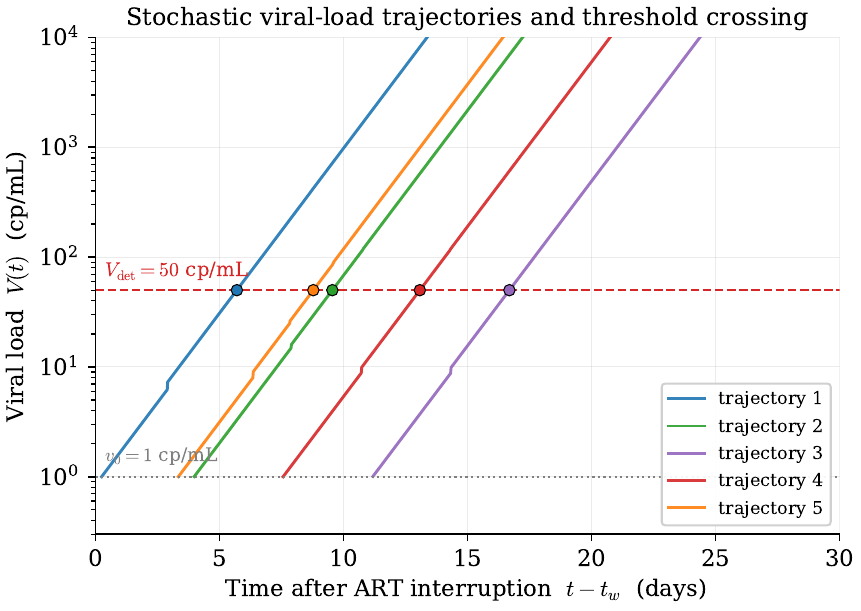}
\caption{Representative shot-noise viral-load trajectories \(V(t)=v_0\sum_{T_i\le t}e^{r(t-T_i)}\) from Eq.~\eqref{eq:V_sum}, for \(\lambda=0.30\,{\rm day}^{-1}\), \(v_0=1\), \(r=0.69\,{\rm day}^{-1}\), and \(V_{\rm det}=50\) copies\,mL\(^{-1}\). Filled circles mark first-passage rebound times \(T_{\rm reb}=\inf\{t\ge t_w:V(t)\ge V_{\rm det}\}\). The wide spread of crossing times at fixed parameters reflects the persistent dispersion of Eq.~\eqref{eq:cv2}.}
\label{fig:schematic}
\end{figure}

\section{From Viral-Load Distribution to Rebound-Time Law}
\label{sec:exact_fp}

The viral-load distribution determines the rebound-time law. When \(r>0\), each established lineage grows monotonically, so \(V(t)\) increases between reactivation events and jumps upward at event times. Therefore,
\begin{equation}
\{\Treb\le t\}=\{V(t)\ge\Vdet\},
\end{equation}
and the exact monotone first-passage relation is
\begin{equation}
\Prob(\Treb>t)=F_{V(t)}(\Vdet^-),
\qquad
\Prob(\Treb\le t)=1-F_{V(t)}(\Vdet^-).
\label{eq:T_survival_exact}
\end{equation}
Together with Eqs.~\eqref{eq:laplace_general}--\eqref{eq:laplace_inversion}, this gives an exact semi-analytical rebound-time distribution, up to numerical Laplace inversion. If net growth becomes negative or strongly time dependent, \(V(t)\) need not be monotone and the equality above fails; then a genuinely path-dependent first-passage calculation is required. The monotone regime \(r>0\) is the natural baseline for early post-interruption rebound.

\subsection{Bias of mean-field threshold crossing}

A common deterministic surrogate defines \(\bar T\) by
\begin{equation}
\E[V(\bar T)]=\Vdet .
\label{eq:meanfield_def}
\end{equation}
For constant \(\lambda\), Eq.~\eqref{eq:mean_var_constant} gives
\begin{equation}
\bar T
=
\tw+
\frac{1}{r}
\log\left(1+\frac{r\Vdet}{v_0\lambda}\right).
\label{eq:meanfield_constant}
\end{equation}
This time summarizes the ensemble-average viral burden but is generally not the mean or median first-passage time. The reason is structural: \(\E[V(t)]\) averages individuals who have not yet reactivated together with a minority of early rebound trajectories carrying large viral loads. In the rare-reactivation regime, this tail pulls the mean-load crossing earlier than the typical individual crossing. As \(\lambda\to0\),
\begin{equation}
\bar T-\tw
\sim
\frac{1}{r}
\log\left(\frac{r\Vdet}{v_0\lambda}\right),
\end{equation}
whereas the stochastic waiting time to the first successful lineage scales as \(1/\lambda\). Mean-field threshold crossing therefore systematically underestimates typical rebound times when reactivation is rare, as illustrated in Figure~\ref{fig:mean_vs_fp}.

Figure~\ref{fig:mean_vs_fp} contrasts the mean-field threshold crossing of Eq.~\eqref{eq:meanfield_constant} with the true first-passage law in a rare-reactivation regime (\(\lambda=0.10\,{\rm day}^{-1}\)). The solid curve is a \(2\times10^{4}\)-trajectory simulation of the exact rebound survival; the dashed curve is the single-founder law of Eq.~\eqref{eq:sf_survival_general}, which the simulation tracks almost exactly. The mean-field crossing \(\bar T\approx8.5\) days from Eq.~\eqref{eq:meanfield_constant} precedes the simulated median of \(\approx12.6\) days by roughly four days, making concrete the systematic early bias of the ensemble-mean surrogate: \(\mathbb E[V(t)]\) is inflated by a minority of early high-growth lineages while most individuals remain below detection.

\begin{figure}[tbp]
\centering
\includegraphics[width=0.78\linewidth]{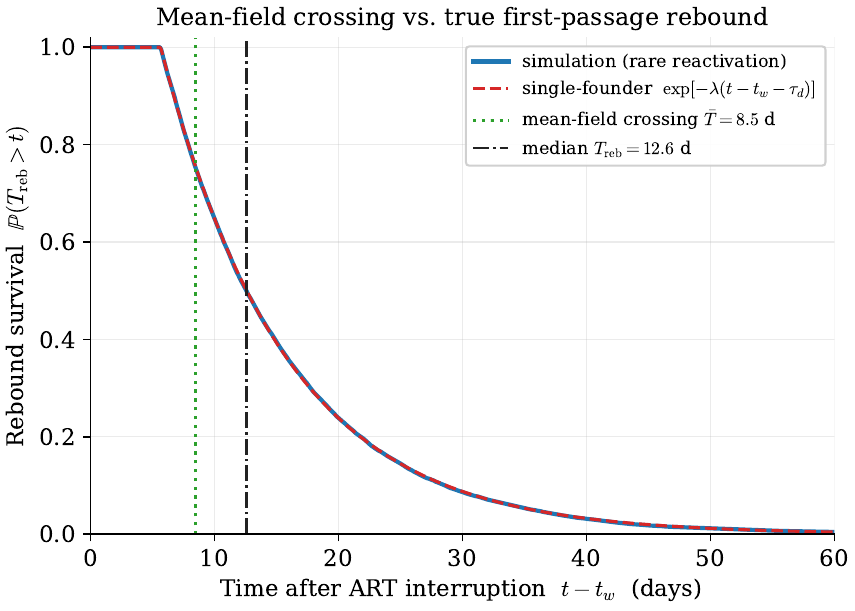}
\caption{Mean-field threshold crossing versus stochastic first passage in a rare-reactivation regime. The mean-field crossing time defined by \(\mathbb E[V(\bar T)]=V_{\rm det}\) (Eq.~\eqref{eq:meanfield_constant}) occurs earlier than the typical rebound time because the ensemble mean is dominated by rare early high-growth trajectories. The single-founder survival of Eq.~\eqref{eq:sf_survival_general} (dashed) matches the exact simulation (solid).}
\label{fig:mean_vs_fp}
\end{figure}

\subsection{Single-founder approximation}
\label{sec:single_founder}

The exact first-passage relation in Eq.~\eqref{eq:T_survival_exact} is mathematically complete, but it is not always the most transparent form for interpretation or inference. The central simplification used in the closed-form results below is the single-founder approximation. In this approximation, detectable rebound is assumed to be dominated by the earliest successful expanding lineage. Contributions from later lineages that are individually below threshold are retained only as a cooperative correction.

Let \(T_1\) denote the first successful established reactivation time after washout. If a founder lineage is established at time \(s\), begins with effective plasma contribution \(v_0\), and grows with net rate \(r>0\), then in the absence of an eclipse phase it becomes detectable when
\begin{equation}
v_0 \e^{r(t-s)}=\Vdet .
\label{eq:founder_detection_condition}
\end{equation}
Solving Eq.~\eqref{eq:founder_detection_condition} gives the growth-to-detection delay
\begin{equation}
\td
=
\frac{1}{r}
\log\left(\frac{\Vdet}{v_0}\right).
\label{eq:td_single_founder}
\end{equation}
If an eclipse phase of duration \(\tau_e\) precedes measurable plasma production, the total delay from successful establishment to detection is
\begin{equation}
\tdelay
=
\tau_e+\td
=
\tau_e+
\frac{1}{r}
\log\left(\frac{\Vdet}{v_0}\right).
\label{eq:tdelay_single_founder}
\end{equation}
Thus, the observed rebound time separates into a hidden stochastic waiting time and a deterministic detection delay,
\begin{equation}
\Treb\approx T_1+\tdelay .
\label{eq:single_founder_T}
\end{equation}
This decomposition is the mathematical basis for the shifted-hazard laws derived below. It states that the clinical rebound clock is a delayed version of the hidden successful-reactivation clock.

\begin{proposition}[Single-founder rebound survival]
Under the single-founder approximation, the rebound-survival probability is
\begin{equation}
\Prob(\Treb>t)
\approx
\begin{cases}
1, & t<\tw+\tdelay,\\[6pt]
\displaystyle
\exp\left[-\int_{\tw}^{t-\tdelay}\lambda(s)\dd s\right],
& t\ge \tw+\tdelay,
\end{cases}
\label{eq:sf_survival_general}
\end{equation}
with density
\begin{equation}
f_{\rm reb}(t)
\approx
\lambda(t-\tdelay)
\exp\left[-\int_{\tw}^{t-\tdelay}\lambda(s)\dd s\right]
\ind_{\{t\ge \tw+\tdelay\}} .
\label{eq:sf_density_general}
\end{equation}
\end{proposition}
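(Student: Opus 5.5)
Under the single-founder approximation, Eq.~\eqref{eq:single_founder_T} replaces the rebound time by $\Treb = T_1+\tdelay$, where $\tdelay$ is the deterministic constant of Eq.~\eqref{eq:tdelay_single_founder}. I will therefore treat this identity as exact within the approximation. The task then reduces to computing the law of $T_1$, the first arrival after $\tw$ of the inhomogeneous Poisson process $N$ with intensity $\lambda$, and shifting it by the constant $\tdelay$.

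The first step is the survival function of $T_1$. The event $\{T_1>u\}$ coincides with $\{N(u)-N(\tw)=0\}$, so Eq.~\eqref{eq:Sact_section} (equivalently the $n=0$ case of Eq.~\eqref{eq:poisson_solution_intro}) gives
\[
\Prob(T_1>u)=\e^{-\Lambda(u)}
\]
for $u\ge\tw$. For $u<\tw$ we have $\Prob(T_1>u)=1$, since no events are counted before washout.

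The second step is the shift. Since $\tdelay$ is deterministic,
\[
\Prob(\Treb>t)=\Prob(T_1>t-\tdelay).
\]
There are two cases:
\begin{itemize}
\item If $t<\tw+\tdelay$, then $t-\tdelay<\tw$ and the probability is $1$.
\item Otherwise, it equals $\exp[-\int_{\tw}^{t-\tdelay}\lambda(s)\dd s]$.
\end{itemize}
This is Eq.~\eqref{eq:sf_survival_general}. The eclipse is already built into $\tdelay$, consistent with the substitution rule in Eq.~\eqref{eq:eclipse_substitution}.

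The third step is the density, obtained as $f_{\rm reb}(t)=-\frac{\dd}{\dd t}\Prob(\Treb>t)$. On $t>\tw+\tdelay$, the chain rule applied to the upper limit $t-\tdelay$ gives the factor $\lambda(t-\tdelay)$ times the exponential. On $t<\tw+\tdelay$ the derivative is zero. Together these yield Eq.~\eqref{eq:sf_density_general} with the indicator. At the point $t=\tw+\tdelay$ the survival is continuous, because the exponent vanishes there. No atom arises, so the density statement holds almost everywhere, provided $\lambda$ is locally integrable and the derivative is taken at Lebesgue points of $\lambda$.

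The main obstacle is not this computation but the meaning of ``$\approx$''. The exact survival is $\Prob(V(t)<\Vdet)$ by Eq.~\eqref{eq:T_survival_exact}. Later founders arriving before $T_1+\tdelay$ only add to $V$, so the exact $\Treb$ satisfies $\Treb\le T_1+\tdelay$. Hence the single-founder survival is an upper bound on the true survival.

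I would justify the approximation by bounding the cooperative contribution in the rare-reactivation regime. The additional lineages accumulated during the window $\tdelay$ have expected mass at most of order $v_0\int\lambda\,\e^{r(\cdot)}$, which is controlled by Eq.~\eqref{eq:mean_var_general}. The correction is therefore small when $\lambda/r$ is small. I would state the proposition under the approximation and defer this error control to Appendix~\ref{app:first_passage_bounds}.
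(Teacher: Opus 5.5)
Your proposal is correct and follows essentially the same route as the paper. Both treat $\Treb\approx T_1+\tdelay$ and identify $\{\Treb>t\}$ with the absence of successful founders in $[\tw,t-\tdelay]$. Both evaluate this with the Poisson zero-count probability, set the survival to $1$ for $t<\tw+\tdelay$, and differentiate to get the density. Your extra remarks, the upper-bound direction $\Treb\le T_1+\tdelay$ and the heuristic size of the cooperative correction, match the discussion the paper gives immediately after its proof and are not needed for the proposition as stated.
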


\begin{proof}
A successful founder established at time \(s\) becomes detectable approximately at \(s+\tdelay\). Therefore, rebound by time \(t\) requires at least one successful founder before \(t-\tdelay\). Equivalently, under the single-founder approximation, the event \(\{\Treb>t\}\) is approximated by the event that no successful founder has occurred in the interval \([\tw,t-\tdelay]\). Since successful founder events form an inhomogeneous Poisson process with intensity \(\lambda(s)\),
\begin{equation}
\Prob(\Treb>t)
\approx
\Prob(T_1>t-\tdelay)
=
\exp\left[-\int_{\tw}^{t-\tdelay}\lambda(s)\dd s\right],
\qquad t\ge \tw+\tdelay .
\end{equation}
For \(t<\tw+\tdelay\), even a founder established immediately at \(\tw\) has not had sufficient time to reach the detection threshold, so \(\Prob(\Treb>t)\approx1\). Differentiating the survival function gives Eq.~\eqref{eq:sf_density_general}.
\end{proof}

Equation~\eqref{eq:sf_survival_general} has a direct biological interpretation. The probability that a participant remains below detection at time \(t\) is approximately the probability that no successful reactivation event occurred early enough to grow to \(\Vdet\) by time \(t\). Thus, the observed rebound hazard at time \(t\) is controlled by the hidden successful-reactivation intensity at the earlier time \(t-\tdelay\).

For a constant successful-reactivation rate, \(\lambda(t)=\lambda\), Eq.~\eqref{eq:sf_survival_general} reduces to the shifted exponential survival law
\begin{equation}
\Prob(\Treb>t)
\approx
\begin{cases}
1, & t<\tw+\tdelay,\\[6pt]
\exp[-\lambda(t-\tw-\tdelay)],
& t\ge \tw+\tdelay .
\end{cases}
\label{eq:sf_constant_survival}
\end{equation}
Equivalently,
\begin{equation}
\Treb
\approx
\tw+\tdelay+E,
\qquad
E\sim {\rm Exp}(\lambda).
\label{eq:sf_constant_decomposition}
\end{equation}
Taking expectations gives
\begin{equation}
\E[\Treb]
\approx
\tw+\tdelay+\frac{1}{\lambda}.
\label{eq:sf_constant_mean_delay}
\end{equation}
Using Eq.~\eqref{eq:tdelay_single_founder}, this becomes
\begin{equation}
\E[\Treb]
\approx
\tw+\tau_e+\frac{1}{\lambda}
+
\frac{1}{r}
\log\left(\frac{\Vdet}{v_0}\right).
\label{eq:sf_constant_mean_full}
\end{equation}
Hence, the familiar constant-rate mean rebound time is not an independent assumption. It follows directly from the single-founder shifted-hazard approximation. The term \(1/\lambda\) is the mean waiting time for the first successful established lineage, while \(\tau_e+r^{-1}\log(\Vdet/v_0)\) is the deterministic delay from establishment to detectability.

The approximation also has a definite direction. If any founder occurs before \(t-\tdelay\), that lineage alone reaches \(\Vdet\) by time \(t\). Therefore
\begin{equation}
\{\Treb>t\}\subseteq\{N(t-\tdelay)-N(\tw)=0\},
\qquad
\Prob(\Treb>t)\le
\exp\left[-\int_{\tw}^{t-\tdelay}\lambda(s)\dd s\right].
\label{eq:sf_upper_bound}
\end{equation}
Thus, the single-founder survival is an upper bound on the true survival probability. It becomes exact when threshold crossing by several individually sub-detectable lineages is negligible.

A useful measure of the possible cooperative correction is
\begin{equation}
\epsilon(t)
=
\int_{t-\tdelay}^{t}\lambda(s)\dd s ,
\label{eq:epsilon_coop}
\end{equation}
the expected number of successful founders within one detection-delay window. When \(\epsilon(t)\ll1\), few founders arise during the time required for one lineage to become detectable, and the earliest successful founder typically dominates. When \(\epsilon(t)\gtrsim1\), several recent lineages may coexist below threshold and their summed viral output can advance rebound relative to \(T_1+\tdelay\).

For a constant rate, \(\epsilon(t)=\lambda\tdelay\). This dimensionless quantity compares the growth-to-detection timescale with the mean waiting time for successful reactivation. Small \(\lambda\tdelay\) corresponds to the rare-reactivation regime, in which the single-founder approximation is most accurate. Larger \(\lambda\tdelay\) increases the opportunity for cooperative crossing, but exponential growth still gives the earliest founder a multiplicative advantage over later founders. Consequently, cooperative effects generally shift rebound earlier rather than changing the qualitative structure of the shifted-hazard law.
At the baseline values \(\lambda=0.30\,{\rm day}^{-1}\) and \(\td=5.7\) days, \(\lambda\td\approx1.7\), so the conservative condition \(\epsilon\ll1\) is not strictly satisfied. We therefore compare the approximation with full stochastic simulation of the exact process, allowing any number of sub-detectable lineages to cross cooperatively. Each simulation draws Poisson reactivation times, assigns to each event a lineage \(v_0\e^{r(t-T_i)}\), and records the first realization-by-realization crossing of \(\Vdet\). No single-founder assumption is imposed in the simulation, so the comparison isolates the cooperative-crossing error.

\begin{table}[ht]
\centering
\caption{Accuracy of the single-founder (SF) median rebound time against full cooperative simulation (\(6\times10^4\) trajectories per row, fixed seed; \(r=0.69\,\mathrm{day}^{-1}\), \(v_0=1\), \(\Vdet=50\)).}
\label{tab:sf_accuracy}
\begin{tabular}{ccccc}
\toprule
\(\lambda\) (day\(^{-1}\)) & \(\lambda\td\) & SF median (d) & Simulated median (d) & Relative error \\
\midrule
\(0.17\) & \(0.96\) & \(9.8\) & \(9.5\) & \(-2.9\%\) \\
\(0.20\) & \(1.13\) & \(9.1\) & \(8.8\) & \(-3.7\%\) \\
\(0.30\) & \(1.70\) & \(8.0\) & \(7.6\) & \(-5.1\%\) \\
\(0.54\) & \(3.06\) & \(7.0\) & \(6.3\) & \(-9.8\%\) \\
\bottomrule
\end{tabular}
\end{table}

The single-founder formula overestimates the median by about \(5\%\) at \(\lambda\td\approx1.7\) and by about \(10\%\) at \(\lambda\td\approx3\). This robustness follows from exponential growth: the earliest founder maintains a multiplicative advantage \(\e^{r\Delta T}\) over a later founder born \(\Delta T\) afterward, so cooperative crossing usually advances detection only modestly. We therefore use the single-founder formulas as the principal closed-form results, noting that they slightly overestimate rebound times at high \(\lambda\); the leading cooperative correction is derived in Appendix~\ref{app:two_founder}.
\section{Explicit Activation Profiles and Rebound-Time Laws}
\label{sec:activation_profiles}

The previous sections showed that, in the single-founder approximation, any successful-reactivation intensity \(\lambda(t)\) generates a rebound-time law through the shifted cumulative hazard
\begin{equation}
\Prob(T_{\rm reb}>t)
\approx
\exp\!\left[-\int_{\tw}^{t-\tau_{\rm det}}\lambda(s)\dd s\right],
\qquad
t\ge \tw+\tau_{\rm det},
\label{eq:general_shifted_survival_profiles}
\end{equation}
with \(\Prob(T_{\rm reb}>t)\approx1\) for \(t<\tw+\tau_{\rm det}\). Here
\begin{equation}
\tau_{\rm det}
=
\tau_e+
\frac{1}{r}
\log\!\left(\frac{V_{\rm det}}{v_0}\right)
\label{eq:tau_det_profiles}
\end{equation}
is the deterministic delay from successful activation to clinical detectability. In the present paper, \(\lambda(t)\) denotes the intensity of \emph{successful established} reactivation. If \(\lambda_{\rm act}(t)\) is the raw activation intensity used in earlier activation-survival models and \(p_{\rm est}(t)\) is the probability that an activated lineage establishes rather than undergoes early extinction, then
\begin{equation}
\lambda(t)=p_{\rm est}(t)\lambda_{\rm act}(t).
\label{eq:lambda_successful_adjusted}
\end{equation}
Thus, the Poisson intensities used previously for constant, periodic, stochastic, and washout-driven reactivation remain valid after this biological reinterpretation; when \(p_{\rm est}\) is constant it is absorbed into the effective baseline rate \(\lambda_0\)~\cite{Taye2025CM}.

\subsection{Constant successful reactivation}
\label{sec:constant}

For the baseline case \(\lambda(t)=\lambda\), the first successful established lineage satisfies
\begin{equation}
T_1-\tw\sim {\rm Exp}(\lambda),
\end{equation}
and therefore
\begin{equation}
T_{\rm reb}\approx \tw+\tau_{\rm det}+E,
\qquad E\sim{\rm Exp}(\lambda).
\label{eq:constant_shifted_exp_compact}
\end{equation}
The rebound survival function, density, mean, variance, and \(p\)-th quantile are
\begin{align}
\Prob(T_{\rm reb}>t)
&=
\begin{cases}
1, & t<\tw+\tau_{\rm det},\\
\exp[-\lambda(t-\tw-\tau_{\rm det})],
& t\ge\tw+\tau_{\rm det},
\end{cases}
\label{eq:constant_survival_compact}
\\[4pt]
f_{\rm reb}(t)
&=
\lambda e^{-\lambda(t-\tw-\tau_{\rm det})}
\mathbf{1}_{\{t\ge\tw+\tau_{\rm det}\}},
\label{eq:constant_density_compact}
\\[4pt]
\E[T_{\rm reb}]
&\approx
\tw+\tau_{\rm det}+\frac{1}{\lambda},
\qquad
\Var(T_{\rm reb})\approx\frac{1}{\lambda^2},
\label{eq:constant_mean_var_compact}
\\[4pt]
Q_p
&=
\tw+\tau_{\rm det}
-\frac{1}{\lambda}\log(1-p).
\label{eq:constant_quantile_compact}
\end{align}
Thus, constant successful reactivation produces a shifted exponential rebound-time distribution: \(\tau_{\rm det}\) translates the distribution, while \(1/\lambda\) controls its width.

The earlier paper estimated raw reactivation rates in the range \(\lambda_{\rm act}\approx0.17\)--\(0.54\,{\rm day}^{-1}\) for HIV and \(0.5\)--\(2.1\,{\rm day}^{-1}\) for SIV~\cite{Pinkevych2019,Wu2020,Taye2025CM}. In the present first-passage setting these values should be interpreted as upper-scale activation intensities; the effective successful rate is \(\lambda=p_{\rm est}\lambda_{\rm act}\). For example, with \(V_{\rm det}=50\), \(v_0=1\), and \(r=0.69\,{\rm day}^{-1}\), one obtains \(\tau_d\simeq5.7\) days. Taking \(\lambda=0.30\,{\rm day}^{-1}\) and \(\tau_e=0\) gives
\begin{equation}
\E[T_{\rm reb}]\approx5.7+\frac{1}{0.30}\approx9.0\ {\rm days},
\qquad
Q_{0.5}\approx5.7+\frac{\log2}{0.30}\approx8.0\ {\rm days}.
\end{equation}
This baseline is intentionally fast and should not be claimed to reproduce the \(16\)-day median reported for the \(50\)-copy threshold~\cite{gunst2025ati}. A data-calibrated interpretation instead uses a slower effective growth rate, for example \(r_{\rm eff}\approx0.33\,{\rm day}^{-1}\), together with \(\lambda_{\rm eff}\approx0.17\)--\(0.20\,{\rm day}^{-1}\) and a nonzero eclipse phase, yielding rebound medians in the observed ATI range.

\subsection{ART-washout-dependent reactivation}
\label{sec:washout}

The washout model in the earlier paper used the pharmacokinetic form \(\lambda_{\rm act}(t)=\lambda_0(1-e^{-k_{\rm drug}t})\). We use the shifted and generalized successful-reactivation form
\begin{equation}
\lambda(t)
=
\lambda_0\left[1-A_0e^{-k_{\rm drug}(t-\tw)}\right],
\qquad t\ge\tw,
\label{eq:washout_lambda_compact}
\end{equation}
where \(\lambda_0\) is the unsuppressed successful-reactivation rate, \(A_0\in[0,1]\) is the initial suppression strength, and \(k_{\rm drug}\) is the elimination rate. Defining
\begin{equation}
y=t-\tw-\tau_{\rm det},
\end{equation}
the shifted cumulative hazard is
\begin{equation}
\Lambda_{\rm w}(y)
=
\lambda_0\left[
y-\frac{A_0}{k_{\rm drug}}\left(1-e^{-k_{\rm drug}y}\right)
\right],
\qquad y\ge0.
\label{eq:washout_cumhaz_compact}
\end{equation}
Hence
\begin{align}
\Prob(T_{\rm reb}>t)
&\approx
\exp[-\Lambda_{\rm w}(y)],
\qquad y\ge0,
\label{eq:washout_survival_compact}
\\[4pt]
f_{\rm reb}(t)
&\approx
\lambda_0\left[1-A_0e^{-k_{\rm drug}y}\right]
\exp[-\Lambda_{\rm w}(y)]
\mathbf{1}_{\{y\ge0\}}.
\label{eq:washout_density_compact}
\end{align}
This expression reduces to the constant-rate law when \(A_0=0\) or \(k_{\rm drug}\to\infty\). When \(A_0=1\) and \(k_{\rm drug}y\ll1\),
\begin{equation}
\Lambda_{\rm w}(y)\approx \frac{1}{2}\lambda_0k_{\rm drug}y^2,
\end{equation}
so the early post-delay waiting time is approximately Rayleigh:
\begin{equation}
\Prob(T_{\rm reb}>t)
\approx
\exp\!\left[-\frac{1}{2}\lambda_0k_{\rm drug}y^2\right],
\qquad
\E[T_{\rm reb}]
\approx
\tw+\tau_{\rm det}
+\sqrt{\frac{\pi}{2\lambda_0k_{\rm drug}}}.
\label{eq:washout_rayleigh_compact}
\end{equation}
Thus, slow washout delays rebound on the scale \((\lambda_0k_{\rm drug})^{-1/2}\), rather than the constant-rate scale \(\lambda_0^{-1}\).

The exact mean is
\begin{equation}
\E[T_{\rm reb}]
=
\tw+\tau_{\rm det}
+
\int_0^\infty
\exp[-\Lambda_{\rm w}(y)]\,\dd y,
\label{eq:washout_mean_integral_compact}
\end{equation}
with the convergent series
\begin{equation}
\int_0^\infty e^{-\Lambda_{\rm w}(y)}\dd y
=
\frac{e^{c}}{\lambda_0}
\sum_{n=0}^{\infty}
\frac{(-c)^n}{n!}
\frac{1}{1+n k_{\rm drug}/\lambda_0},
\qquad
c=\frac{\lambda_0A_0}{k_{\rm drug}} .
\label{eq:washout_mean_series_compact}
\end{equation}
For \(\lambda_0=0.30\,{\rm day}^{-1}\), \(A_0=1\), \(k_{\rm drug}=0.30\,{\rm day}^{-1}\), and \(\tau_{\rm det}=5.7\) days, the waiting integral is approximately \(5.7\) days, giving
\begin{equation}
\E[T_{\rm reb}]\approx 11.4\ {\rm days}.
\end{equation}
Halving the clearance rate to \(k_{\rm drug}=0.15\,{\rm day}^{-1}\) increases the waiting term to about \(7.3\) days and gives \(\E[T_{\rm reb}]\approx13.0\) days. The dependence of this survival law on \(k_{\rm drug}\) is shown in Figure~\ref{fig:washout}.

Figure~\ref{fig:washout} plots the washout survival law of Eq.~\eqref{eq:washout_survival_compact}, with shifted cumulative hazard \(\Lambda_w(y)\) from Eq.~\eqref{eq:washout_cumhaz_compact}, for three elimination rates at \(A_0=1\), against the constant-rate limit of Eq.~\eqref{eq:constant_survival_compact}. All curves share the deterministic floor \(t_w+\tau_d\), below which rebound is suppressed. Slower drug clearance (smaller \(k_{\rm drug}\)) holds the early cumulative hazard down and shifts the survival curve later; as \(k_{\rm drug}\to\infty\) residual drug decays instantly and the curve collapses onto the constant-rate shifted-exponential. The early-time Rayleigh scaling \((\lambda_0 k_{\rm drug})^{-1/2}\) of Eq.~\eqref{eq:washout_rayleigh_compact} is visible as the flattened shoulder just past the floor for the slowest-washout (\(k_{\rm drug}=0.1\)) curve.

\begin{figure}[tbp]
\centering
\includegraphics[width=0.78\linewidth]{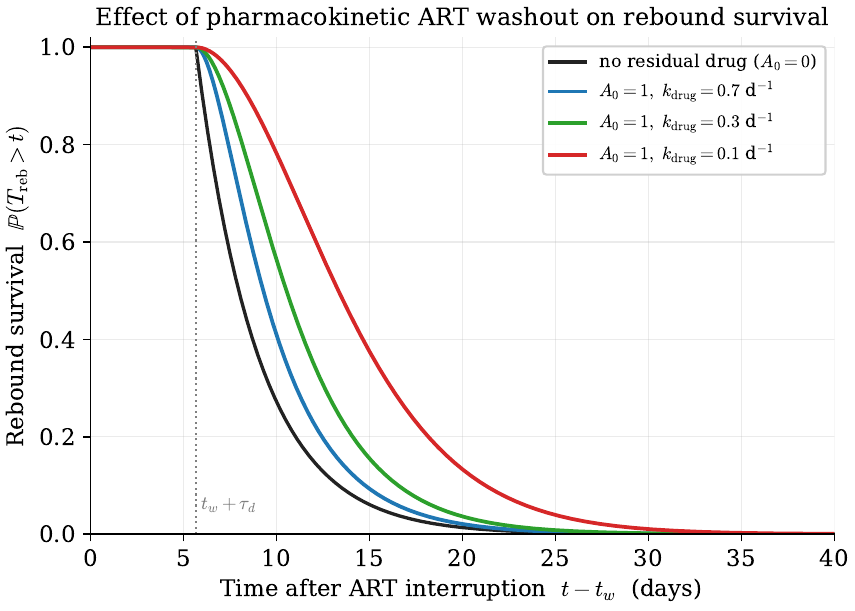}
\caption{Effect of pharmacokinetic ART washout on rebound survival, from the shifted cumulative hazard \(\Lambda_w(y)\) of Eq.~\eqref{eq:washout_cumhaz_compact}. Slower drug elimination suppresses the early cumulative hazard and shifts rebound later; rapid washout approaches the constant-rate shifted-exponential limit of Eq.~\eqref{eq:constant_survival_compact}.}
\label{fig:washout}
\end{figure}

The corresponding ensemble-mean viral load is
\begin{equation}
\E[V(t)]
=
v_0\lambda_0
\left[
\frac{e^{r\Delta}-1}{r}
-
A_0e^{r\Delta}
\frac{1-e^{-(r+k_{\rm drug})\Delta}}{r+k_{\rm drug}}
\right],
\qquad
\Delta=t-\tw.
\label{eq:mean_washout_correct_compact}
\end{equation}
This mean-field expression is useful for comparison but should not be confused with the first-passage law in Eq.~\eqref{eq:washout_survival_compact}.

\subsection{Periodically modulated immune activation}
\label{sec:periodic}

The earlier sinusoidal model used \(\lambda_{\rm act}(t)=\lambda_0+A\sin(\omega t)\). To preserve nonnegativity and separate the mean rate from the relative modulation amplitude, we write
\begin{equation}
\lambda(t)
=
\lambda_0
\left[
1+a\sin(\omega(t-\tw)+\phi)
\right],
\qquad |a|<1,
\label{eq:periodic_lambda_compact}
\end{equation}
where \(a=A/\lambda_0\) is dimensionless and \(\phi\) is the immune phase at interruption. With \(y=t-\tw-\tau_{\rm det}\),
\begin{equation}
\Lambda_{\rm p}(y)
=
\lambda_0y
+
\frac{\lambda_0a}{\omega}
\left[
\cos\phi-\cos(\omega y+\phi)
\right],
\label{eq:periodic_cumhaz_compact}
\end{equation}
and therefore
\begin{align}
\Prob(T_{\rm reb}>t)
&\approx
\exp[-\Lambda_{\rm p}(y)],
\qquad y\ge0,
\label{eq:periodic_survival_compact}
\\[4pt]
f_{\rm reb}(t)
&\approx
\lambda_0[1+a\sin(\omega y+\phi)]
\exp[-\Lambda_{\rm p}(y)]
\mathbf{1}_{\{y\ge0\}}.
\label{eq:periodic_density_compact}
\end{align}
The detection delay shifts immune-phase information forward in time: an activation peak at the hidden reactivation scale produces increased detectable rebound approximately \(\tau_{\rm det}\) later.

For \(|a|\ll1\), expansion of the survival integral gives
\begin{equation}
\E[T_{\rm reb}]
\approx
\tw+\tau_{\rm det}
+\frac{1}{\lambda_0}
-\frac{a}{\omega}\cos\phi
+
\frac{a\lambda_0}{\omega}
\frac{\lambda_0\cos\phi-\omega\sin\phi}
{\lambda_0^2+\omega^2}
+O(a^2).
\label{eq:periodic_mean_compact}
\end{equation}
The phase correction is largest when the immune period is comparable to the intrinsic waiting time \(1/\lambda_0\); rapid oscillations average out. For example, with \(\lambda_0=0.30\,{\rm day}^{-1}\), \(a=0.5\), \(\omega=2\pi/7\,{\rm day}^{-1}\), \(\phi=0\), and \(\tau_{\rm det}=5.7\) days,
\begin{equation}
\E[T_{\rm reb}]
\approx
5.7+3.33-0.557+0.057
\approx
8.5\ {\rm days}.
\end{equation}
This value illustrates phase-dependent acceleration, not a universal prediction: changing \(\phi\) can delay rather than advance rebound.

\subsection{Random reactivation rates and Cox-process rebound}
\label{sec:cox}

The previous stochastic-rate model used the additive form \(\lambda_{\rm act}(t)=\lambda_0+\eta(t)\), which is analytically transparent but may become negative for large fluctuations. For a first-passage theory, it is preferable to use a nonnegative Cox intensity,
\begin{equation}
\lambda(t)
=
\lambda_0
\exp\!\left[
\sigma X(t)-\frac{1}{2}\sigma^2
\right],
\label{eq:lognormal_intensity_compact}
\end{equation}
where \(X(t)\) is a mean-zero stochastic process with unit marginal variance. This preserves \(\E[\lambda(t)]=\lambda_0\) while allowing immune bursts and suppressed intervals.

Conditioned on a realization of \(\lambda(t)\),
\begin{equation}
\Prob(T_{\rm reb}>t\mid\lambda)
\approx
\exp\!\left[
-\int_{\tw}^{t-\tau_{\rm det}}\lambda(s)\dd s
\right].
\label{eq:cox_conditional_compact}
\end{equation}
With
\begin{equation}
\Xi_y=\int_{\tw}^{\tw+y}\lambda(s)\dd s,
\qquad y=t-\tw-\tau_{\rm det},
\end{equation}
the unconditional survival function is the Laplace transform of the integrated random hazard:
\begin{equation}
\Prob(T_{\rm reb}>t)
\approx
\E[e^{-\Xi_y}].
\label{eq:cox_survival_compact}
\end{equation}
If \(\Xi_y\) has mean \(\lambda_0y\) and variance \(\sigma_\Xi^2(y)\), the second-cumulant approximation gives
\begin{equation}
\Prob(T_{\rm reb}>t)
\approx
\exp\!\left[-\lambda_0y+\frac{1}{2}\sigma_\Xi^2(y)\right].
\label{eq:cox_cumulant_compact}
\end{equation}
For short-correlated fluctuations, \(\sigma_\Xi^2(y)\approx Dy\), and hence
\begin{equation}
\E[T_{\rm reb}]
\approx
\tw+\tau_{\rm det}
+
\frac{1}{\lambda_0-D/2},
\qquad D<2\lambda_0 .
\label{eq:cox_mean_compact}
\end{equation}
This is the first-passage analogue of the earlier waiting-time correction
\[
\E[1/\lambda(t)]
\approx
\frac{1}{\lambda_0}
\left(1+\frac{\sigma_\lambda^2}{\lambda_0^2}\right),
\]
but expressed through the integrated hazard rather than through instantaneous inter-event times. For \(\lambda_0=0.30\,{\rm day}^{-1}\), \(D=0.10\,{\rm day}^{-1}\), and \(\tau_{\rm det}=5.7\) days,
\begin{equation}
\E[T_{\rm reb}]
\approx
5.7+\frac{1}{0.30-0.05}
=
9.7\ {\rm days}.
\end{equation}
Thus, hazard variability increases the expected rebound time at the population level, because low-hazard intervals contribute disproportionately to long survival.

\subsection{Heterogeneous reservoir classes}
\label{sec:heterogeneous}

Reservoir heterogeneity can be incorporated by assigning each class \(j=1,\ldots,J\) its own successful-reactivation intensity, founder output, and growth rate. If class \(j\) has \(\lambda_j(t)\), \(v_j\), and \(r_j>0\), then
\begin{equation}
V(t)
=
\sum_{j=1}^{J}
\sum_{T_{ij}\le t}
v_j e^{r_j(t-T_{ij})}.
\label{eq:hetero_V_compact}
\end{equation}
The exact Laplace transform is
\begin{equation}
\Phi_t(q)
=
\exp\!\left\{
-\sum_{j=1}^{J}
\int_{\tw}^{t}
\lambda_j(s)
\left[
1-e^{-qv_j e^{r_j(t-s)}}
\right]\dd s
\right\}.
\label{eq:hetero_laplace_compact}
\end{equation}
The first-passage approximation is obtained by assigning each class its own detection delay
\begin{equation}
\tau_{{\rm det},j}
=
\tau_{e,j}
+
\frac{1}{r_j}
\log\!\left(\frac{V_{\rm det}}{v_j}\right),
\label{eq:hetero_delay_compact}
\end{equation}
so that
\begin{equation}
\Prob(T_{\rm reb}>t)
\approx
\exp\!\left[
-\sum_{j=1}^{J}
\int_{\tw}^{t-\tau_{{\rm det},j}}
\lambda_j(s)\dd s
\right],
\label{eq:hetero_survival_compact}
\end{equation}
where an integral is set to zero if \(t-\tau_{{\rm det},j}<\tw\). The density is
\begin{equation}
f_{\rm reb}(t)
\approx
\left[
\sum_{j=1}^{J}
\lambda_j(t-\tau_{{\rm det},j})
\mathbf{1}_{\{t\ge\tw+\tau_{{\rm det},j}\}}
\right]
\Prob(T_{\rm reb}>t).
\label{eq:hetero_density_compact}
\end{equation}
For constant class-specific rates,
\begin{equation}
\Prob(T_{\rm reb}>t)
\approx
\exp\!\left[
-\sum_{j=1}^{J}
\lambda_j
(t-\tw-\tau_{{\rm det},j})_+
\right].
\label{eq:hetero_constant_survival_compact}
\end{equation}
Thus, early rebound need not be seeded by the largest reservoir class. A rare class can dominate if it has a sufficiently high establishment rate, large founder output, or rapid post-reactivation growth. The probability that class \(j\) seeds the first detectable rebound event is
\begin{equation}
\Prob(J^*=j)
=
\int_{\tw}^{\infty}
\lambda_j(t-\tau_{{\rm det},j})
\mathbf{1}_{\{t\ge\tw+\tau_{{\rm det},j}\}}
\exp\!\left[
-\sum_{\ell=1}^{J}
\int_{\tw}^{t-\tau_{{\rm det},\ell}}
\lambda_\ell(s)\dd s
\right]\dd t.
\label{eq:trigger_prob_compact}
\end{equation}
Figure~\ref{fig:heterogeneous} illustrates this competing-risks structure for reservoir classes with different activation and growth parameters.

In detail, the figure uses two classes: an abundant slow grower (class~1: \(\lambda_1=0.25\,{\rm day}^{-1}\), \(r_1=0.5\,{\rm day}^{-1}\), \(\tau_{d,1}\approx7.8\) days) and a rare fast, high-output grower (class~2: \(\lambda_2=0.05\,{\rm day}^{-1}\), \(r_2=1.0\,{\rm day}^{-1}\), \(v_2=10\), \(\tau_{d,2}\approx1.6\) days), combined through the constant-rate survival of Eq.~\eqref{eq:hetero_constant_survival_compact}. The dashed curves are the single-class survivals and the solid curve is their product. The much shorter delay \(\tau_{d,2}\) lets the rare class govern the earliest part of the survival curve despite its fivefold lower reactivation rate: direct evaluation of Eq.~\eqref{eq:trigger_prob_compact} gives probability \(\approx0.39\) that class~2 seeds the first detectable rebound. This inversion of the naive expectation---that the largest class dominates---is the central qualitative message of the heterogeneous model.

\begin{figure}[tbp]
\centering
\includegraphics[width=0.78\linewidth]{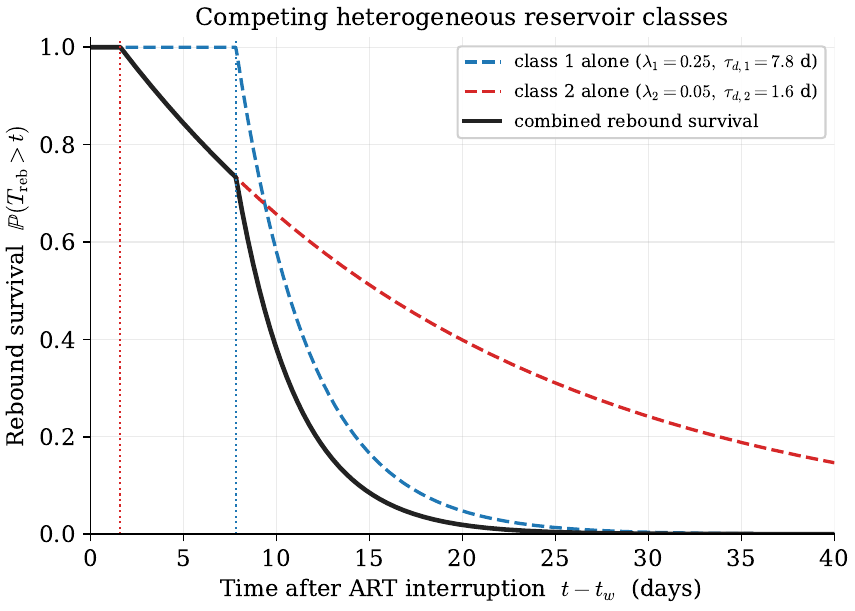}
\caption{Competing reservoir classes as shifted first-arrival processes, from the constant-rate survival law of Eq.~\eqref{eq:hetero_constant_survival_compact}. Class~1 (abundant, slow) and class~2 (rare, fast, high-output) are shown as dashed single-class survivals; their product (solid) is the combined rebound survival. A rare class with high founder output or rapid growth can seed detectable rebound earlier than a larger but slower class; here class~2 seeds first with probability \(\approx0.39\) from Eq.~\eqref{eq:trigger_prob_compact}.}
\label{fig:heterogeneous}
\end{figure}

\section{Validation, Inference, and Biological Interpretation}
\label{sec:validation_inference_interpretation}

A direct prediction of the first-passage formulation is that the observed rebound time depends logarithmically on the viral-load detection threshold. In the single-founder approximation,
\begin{equation}
T_{\rm reb}\approx T_1+\tau_e+\frac{1}{r}\log\left(\frac{V_{\rm det}}{v_0}\right),
\end{equation}
and, for constant successful-reactivation rate \(\lambda\), the median satisfies
\begin{equation}
Q_{0.5}(V_{\rm det})
=
C+\frac{1}{r}\log V_{\rm det},
\qquad
C=t_w+\tau_e+\frac{\log2}{\lambda}-\frac{1}{r}\log v_0 .
\label{eq:median_vs_threshold}
\end{equation}
Here \(\log V_{\rm det}\) is understood relative to a fixed reference unit, e.g. \(1\) copy mL\(^{-1}\). Thus, plotting median rebound time against \(\log V_{\rm det}\) should yield an approximately linear relation with slope \(1/r\).

The threshold-resolved medians reported by Gunst et al.~\cite{gunst2025ati} provide a useful consistency check. The median times to exceed \(50\), \(400\), and \(10{,}000\) copies mL\(^{-1}\) were approximately \(16\), \(21\), and \(32\) days. Linear regression against \(\log V_{\rm det}\) gives
\begin{equation}
\frac{1}{r}=3.05\ {\rm days},
\qquad
r\approx0.33\,{\rm day}^{-1},
\qquad
C\approx3.5\ {\rm days},
\label{eq:gunst_fit}
\end{equation}
with fitted medians \(15.5\), \(21.8\), and \(31.7\) days and \(R^2\simeq0.99\) (Figure~\ref{fig:gunst_fit}). This agreement supports the predicted logarithmic threshold dependence, but it should be interpreted as a consistency check rather than full parameter validation, since three cohort-level medians cannot separately identify \(\lambda\), \(r\), \(v_0\), and \(\tau_e\).

The calibrated value \(r\approx0.33\,{\rm day}^{-1}\) is lower than the nominal maximal early-growth value \(r=0.69\,{\rm day}^{-1}\) used in baseline examples. This is biologically plausible: the fitted \(r\) is an effective net growth rate over the clinically observed range from \(50\) to \(10{,}000\) copies mL\(^{-1}\), not the maximal expansion rate during the steepest phase of uncontrolled infection. Residual drug activity, target-cell limitation, innate immune pressure, and early establishment constraints can all reduce the effective growth rate. The intercept \(C\) constrains only the combination \(t_w+\tau_e+\log2/\lambda-(1/r)\log v_0\). For \(v_0=1\) and \(t_w=0\), \(C=\tau_e+\log2/\lambda\); thus \(\lambda\simeq0.20\,{\rm day}^{-1}\) leaves little additional eclipse delay, whereas \(\lambda=0.30\,{\rm day}^{-1}\) implies \(\tau_e\simeq1.2\) days. The dashed reference line in Figure~\ref{fig:gunst_fit} makes this contrast explicit: fixing the steeper maximal-growth value \(r=0.69\,{\rm day}^{-1}\) and anchoring it at the \(50\)-copy median produces threshold shifts that are too small, so the line falls well below the \(400\)- and \(10{,}000\)-copy medians. The observed logarithmic spacing therefore requires the slower effective rate \(r\approx0.33\,{\rm day}^{-1}\), which is the quantitative reason baseline maximal-growth parameters cannot reproduce the ATI medians.

\begin{figure}[tbp]
\centering
\includegraphics[width=0.72\linewidth]{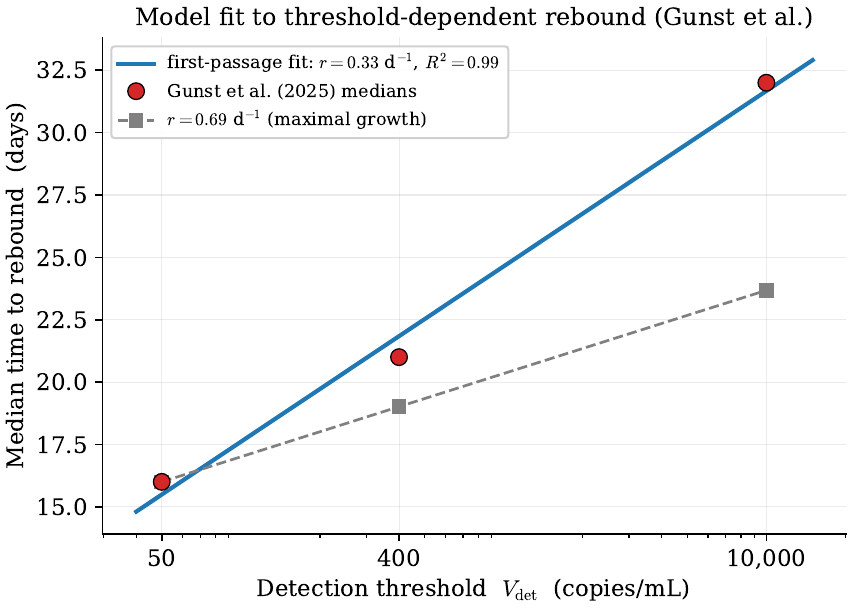}
\caption{Threshold-dependence consistency check using the Gunst et al.\ ATI meta-analysis~\cite{gunst2025ati}. Points show reported median times to exceed \(50\), \(400\), and \(10{,}000\) copies mL\(^{-1}\). The solid line is the first-passage prediction \(Q_{0.5}=C+r^{-1}\log V_{\rm det}\) of Eq.~\eqref{eq:median_vs_threshold}, whose least-squares fit is Eq.~\eqref{eq:gunst_fit}, giving \(r\approx0.33\,{\rm day}^{-1}\) and \(R^2\simeq0.99\). The dashed line shows the steeper baseline-growth case \(r=0.69\,{\rm day}^{-1}\), anchored at \(50\) copies mL\(^{-1}\), which underpredicts the observed threshold shifts.}
\label{fig:gunst_fit}
\end{figure}

The same formulation yields a likelihood for individual ATI data. For subject \(i\), define
\begin{equation}
\tau_{{\rm det},i}
=
\tau_{e,i}
+
\frac{1}{r_i}
\log\left(\frac{V_{{\rm det},i}}{v_{0i}}\right),
\end{equation}
and let \(\lambda_i(t;\theta)\) be the subject-specific successful-reactivation intensity. Introduce the shifted cumulative hazard
\begin{equation}
H_i(t;\theta)
=
\int_{t_{w,i}}^{t-\tau_{{\rm det},i}}
\lambda_i(s;\theta)\,ds,
\label{eq:subject_hazard}
\end{equation}
with the convention \(H_i(t;\theta)=0\) when \(t<t_{w,i}+\tau_{{\rm det},i}\). Then
\begin{equation}
S_i(t;\theta)=\exp[-H_i(t;\theta)]
\label{eq:subject_survival}
\end{equation}
is the rebound-survival probability for subject \(i\). For an exactly observed rebound time \(t_i\), the density is
\begin{equation}
f_i(t_i;\theta)
=
\lambda_i(t_i-\tau_{{\rm det},i};\theta)
\exp[-H_i(t_i;\theta)]
\mathbf{1}_{\{t_i\ge t_{w,i}+\tau_{{\rm det},i}\}} .
\label{eq:likelihood_uncensored}
\end{equation}
If rebound is interval-censored between the last below-threshold visit \(L_i\) and the first above-threshold visit \(R_i\), the likelihood contribution is
\begin{equation}
\Prob(L_i<T_{{\rm reb},i}\le R_i;\theta)
=
S_i(L_i;\theta)-S_i(R_i;\theta).
\label{eq:likelihood_interval}
\end{equation}
If no rebound is observed by \(c_i\), the contribution is \(S_i(c_i;\theta)\). Thus the full log-likelihood is
\begin{equation}
\ell(\theta)
=
\sum_{i\in\mathcal U}\log f_i(t_i;\theta)
+
\sum_{i\in\mathcal I}
\log\!\left[S_i(L_i;\theta)-S_i(R_i;\theta)\right]
+
\sum_{i\in\mathcal R}\log S_i(c_i;\theta),
\label{eq:full_loglik}
\end{equation}
where \(\mathcal U\), \(\mathcal I\), and \(\mathcal R\) denote exactly observed, interval-censored, and right-censored observations. This is the appropriate statistical object for ATI studies because it respects discrete sampling.

Post-treatment controllers can be represented by a defective survival model. If a fraction \(p_c\) does not generate an established rebound lineage during follow-up, then
\begin{equation}
S_i^{\rm pop}(t)
=
p_c+(1-p_c)S_i(t),
\qquad
f_i^{\rm pop}(t)=(1-p_c)f_i(t).
\label{eq:cure_fraction}
\end{equation}
A nonzero \(p_c\) produces a long-time survival plateau and may represent an empty rebound-competent reservoir, durable immune control, or failure of activated lineages to establish productive growth.

For the special case of constant \(\lambda_i(t)=\lambda\) and uncensored rebound times, the shifted observations
\begin{equation}
z_i=t_i-t_{w,i}-\tau_{{\rm det},i}>0
\end{equation}
satisfy an exponential likelihood,
\begin{equation}
L(\lambda)
=
\lambda^n\exp\left(-\lambda\sum_{i=1}^{n}z_i\right),
\end{equation}
so that
\begin{equation}
\widehat{\lambda}
=
\frac{n}{\sum_{i=1}^{n}z_i}.
\label{eq:mle_lambda}
\end{equation}
This estimator emphasizes that \(\lambda\) should be inferred after subtracting the deterministic detection delay; otherwise stochastic activation waiting and growth-to-detection are confounded. Heterogeneity may be included through
\begin{equation}
\log\lambda_i=\mu_\lambda+\eta_i,
\qquad
\eta_i\sim N(0,\sigma_\lambda^2),
\end{equation}
or by linking reactivation to reservoir measurements,
\begin{equation}
\lambda_i=\rho R_i,
\end{equation}
where \(R_i\) is an effective rebound-competent reservoir measure and \(\rho\) is the successful-reactivation rate per reservoir unit.

The biological interpretation follows from
\begin{equation}
T_{\rm reb}
\approx
\underbrace{T_1}_{\text{successful-reactivation waiting time}}
+
\underbrace{\tau_e+
\frac{1}{r}\log\left(\frac{V_{\rm det}}{v_0}\right)}_{\text{eclipse and growth-to-detection delay}} .
\label{eq:bio_interpretation}
\end{equation}
Reservoir reduction lowers \(\lambda\), thereby extending the stochastic waiting time. Residual ART suppresses early \(\lambda(t)\). Immune control can reduce establishment probability, lowering \(\lambda\), and can also reduce \(r\), increasing the growth-to-detection delay. Latency-reversing agents increase raw activation, but their clinical benefit depends on whether establishment and post-activation growth are simultaneously controlled. Assay sensitivity changes the observed rebound time logarithmically, even if the underlying biological process is unchanged.

Together, these results separate the clinically observed first-passage time from ensemble-average viral dynamics. The single-founder approximation agrees with stochastic simulation when early rebound is dominated by the first established lineage; mean-field threshold crossing underestimates typical rebound in rare-reactivation regimes; slow ART washout delays rebound; and reservoir heterogeneity can reorder the source of first detectable virus. The formulation therefore provides both an analytical interpretation of rebound timing and a likelihood framework for ATI data.

\section{Summary and Conclusions}
\label{sec:summary_conclusion}

This work develops a stochastic first-passage theory for HIV rebound after ART interruption. The central point is that clinical rebound is not the hidden activation time of a latent cell and not the threshold crossing of an ensemble-mean viral load. It is the first time at which an individual stochastic viral-load trajectory crosses an assay threshold:
\begin{equation}
T_{\rm reb}
=
\inf\{t\ge t_w:V(t)\ge V_{\rm det}\}.
\end{equation}

Successful established reactivation events generate a Poisson shot-noise viral-load process,
\begin{equation}
V(t)=v_0\sum_{T_i\le t}e^{r(t-T_i)},
\end{equation}
with Laplace transform
\begin{equation}
\Phi_t(q)
=
\exp\left\{
-\int_{t_w}^{t}
\lambda(s)
\left[
1-\exp\left(-qv_0e^{r(t-s)}\right)
\right]ds
\right\}.
\end{equation}
For \(r>0\), \(V(t)\) is monotone after establishment, and
\begin{equation}
\Prob(T_{\rm reb}>t)=F_{V(t)}(V_{\rm det}^{-}),
\end{equation}
which gives an exact semi-analytical rebound-time distribution by Laplace inversion.

The most transparent closed form arises in the single-founder regime:
\begin{equation}
T_{\rm reb}
\approx
T_1+\tau_{\rm det},
\qquad
\tau_{\rm det}
=
\tau_e+
\frac{1}{r}
\log\left(\frac{V_{\rm det}}{v_0}\right).
\end{equation}
Thus, rebound timing separates into a stochastic waiting time for successful reactivation and a deterministic delay from activation to detectability. For a general time-dependent successful-reactivation intensity,
\begin{equation}
\Prob(T_{\rm reb}>t)
\approx
\exp\left[
-\int_{t_w}^{t-\tau_{\rm det}}\lambda(s)\,ds
\right],
\end{equation}
with the integral set to zero when \(t<t_w+\tau_{\rm det}\).

This decomposition gives direct biological interpretation. Reservoir-reduction interventions lower \(\lambda\). Immune control can lower \(\lambda\) by reducing establishment probability and can increase \(\tau_{\rm det}\) by reducing \(r\). ART washout delays rebound by suppressing early \(\lambda(t)\). Periodic immune activation introduces phase-dependent shifts, Cox-process activation captures immune-driven overdispersion, and heterogeneous reservoirs generate competing first-arrival processes. A small reservoir class can dominate early rebound if it has sufficiently high establishment probability, founder output, or growth rate.

The theory also explains why mean-field viral-load predictions can be misleading. In rare-reactivation regimes, \(\mathbb E[V(t)]\) may be dominated by a minority of early high-growth trajectories, causing the mean-field threshold crossing to occur earlier than the median or typical observed rebound time. First-passage survival functions, quantiles, and interval-censored likelihoods are therefore the appropriate objects for ATI analysis.

Finally, threshold-resolved ATI medians support the predicted logarithmic dependence of rebound time on \(V_{\rm det}\). Fitting the reported \(50\), \(400\), and \(10{,}000\) copies mL\(^{-1}\) medians yields an effective growth rate \(r\approx0.33\,{\rm day}^{-1}\), lower than the nominal maximal early-growth value. This suggests that clinically observed rebound reflects effective growth shaped by residual drug activity, target-cell limitation, immune pressure, and early establishment constraints. The first-passage framework therefore complements earlier activation-survival and mean-load models by focusing directly on the observable measured in ATI trials: the first detectable viral rebound.

\appendix
\renewcommand{\thesection}{\Alph{section}}
\titleformat{\section}
  {\normalfont\Large\bfseries}{Appendix~\thesection.}{0.6em}{}
\titleformat{\subsection}
  {\normalfont\large\bfseries}{\thesubsection}{0.6em}{}

\section{Poisson Reactivation and Shot-Noise Viral Load}
\label{app:poisson_waiting_detailed}
\label{app:shot_noise_detailed}

Let \(N(t)\) count successful established reactivation events after treatment interruption or effective washout at \(\tw\). For an inhomogeneous Poisson process with intensity \(\lambda(t)\), the count probabilities
\[
P(n,t)=\Prob\{N(t)-N(\tw)=n\}
\]
satisfy
\begin{equation}
\frac{\dd P(n,t)}{\dd t}
=
\lambda(t)\left[P(n-1,t)-P(n,t)\right],
\qquad n\ge1,
\label{eq:app_master}
\end{equation}
with \(\dot P(0,t)=-\lambda(t)P(0,t)\) and \(P(n,\tw)=\delta_{n0}\). Introducing
\begin{equation}
\Lambda(t)=\int_{\tw}^{t}\lambda(s)\dd s,
\end{equation}
the probability-generating function \(G(z,t)=\sum_{n\ge0}P(n,t)z^n\) obeys
\[
\partial_tG=\lambda(t)(z-1)G,
\qquad
G(z,\tw)=1,
\]
and hence
\[
G(z,t)=\exp[(z-1)\Lambda(t)].
\]
Therefore
\begin{equation}
P(n,t)
=
\frac{\Lambda(t)^n}{n!}\e^{-\Lambda(t)}.
\label{eq:app_poisson_solution}
\end{equation}
In particular, the first successful reactivation time \(T_1\) has survival and density
\begin{equation}
\Prob(T_1>t)=\e^{-\Lambda(t)},
\qquad
f_{T_1}(t)=\lambda(t)\e^{-\Lambda(t)}.
\label{eq:app_first_waiting}
\end{equation}
For constant \(\lambda(t)=\lambda\), \(T_1-\tw\sim{\rm Exp}(\lambda)\).

Each successful event at time \(T_i\) seeds a founder contribution \(v_0\). Between events, the viral load grows with net rate \(r\), so
\begin{equation}
\dd V(t)=rV(t)\dd t+v_0N(\dd t),
\qquad V(\tw)=0.
\label{eq:app_jump_growth}
\end{equation}
Multiplication by \(\e^{-rt}\) and integration yield
\begin{equation}
V(t)
=
\int_{\tw}^{t}v_0\e^{r(t-s)}N(\dd s)
=
v_0\sum_{T_i\le t}\e^{r(t-T_i)}.
\label{eq:app_shot_noise}
\end{equation}
Thus, post-interruption viral load is a Poisson shot-noise process: a random superposition of exponentially growing founder lineages.

\section{Laplace Transform and Cumulants}
\label{app:laplace_discretization}
\label{app:cumulants_detailed_extra}
\label{app:cumulants}

Because \(V(t)\) is a linear functional of a Poisson random measure, its Laplace transform follows from the Poisson functional identity. For \(q\ge0\),
\begin{equation}
\Phi_t(q)
=
\E[\e^{-qV(t)}]
=
\exp\left\{
-\int_{\tw}^{t}\lambda(s)
\left[
1-\exp\!\left(-qv_0\e^{r(t-s)}\right)
\right]\dd s
\right\}.
\label{eq:app_laplace}
\end{equation}
Equivalently, this formula is obtained by partitioning time, assigning to each interval the independent Poisson increment \(\Delta N_k\), and taking the continuum limit of
\[
\E[\e^{-qV(t)}]
\approx
\prod_k
\exp\left[
-\lambda(s_k)\Delta s
\left(1-\e^{-qv_0e^{r(t-s_k)}}\right)
\right].
\]

The cumulants follow by expanding \(\log\Phi_t(q)\). Since
\[
1-\e^{-qz}
=
\sum_{m=1}^{\infty}(-1)^{m+1}\frac{q^mz^m}{m!},
\]
comparison with the Laplace-cumulant expansion
\[
\log\E[\e^{-qV(t)}]
=
\sum_{m=1}^{\infty}(-1)^m\frac{q^m}{m!}\kappa_m(t)
\]
gives
\begin{equation}
\kappa_m(t)
=
v_0^m
\int_{\tw}^{t}\lambda(s)\e^{mr(t-s)}\dd s,
\qquad m=1,2,\ldots .
\label{eq:app_cumulants}
\end{equation}
Hence
\begin{equation}
\E[V(t)]
=
v_0\int_{\tw}^{t}\lambda(s)\e^{r(t-s)}\dd s,
\qquad
\Var[V(t)]
=
v_0^2\int_{\tw}^{t}\lambda(s)\e^{2r(t-s)}\dd s.
\label{eq:app_mean_var}
\end{equation}
For constant \(\lambda\), with \(\Delta=t-\tw\),
\begin{equation}
\kappa_m(t)
=
\frac{v_0^m\lambda}{mr}
\left(\e^{mr\Delta}-1\right).
\label{eq:app_const_cumulants}
\end{equation}

\section{First Passage, Single-Founder Bound, and Mean-Field Bias}
\label{app:first_passage_bounds}
\label{app:meanfield_asymptotics}
\label{app:constant_rate_detailed}

When \(r>0\), \(V(t)\) is nondecreasing: it grows between events and jumps upward at each successful reactivation. Therefore
\begin{equation}
\{\Treb\le t\}=\{V(t)\ge\Vdet\},
\end{equation}
and the exact monotone first-passage relation is
\begin{equation}
\Prob(\Treb>t)
=
\Prob(V(t)<\Vdet)
=
F_{V(t)}(\Vdet^-).
\label{eq:app_exact_fp}
\end{equation}
Together with the Laplace transform in Eq.~\eqref{eq:app_laplace}, this gives an exact semi-analytical rebound-time law.

A founder seeded at \(s\) reaches the detection threshold by time \(t\) if
\[
v_0\e^{r(t-s)}\ge\Vdet.
\]
Thus
\begin{equation}
s\le t-\td,
\qquad
\td=\frac{1}{r}\log\left(\frac{\Vdet}{v_0}\right).
\label{eq:app_td}
\end{equation}
The occurrence of at least one successful event before \(t-\td\) is sufficient for rebound by \(t\). Hence
\begin{equation}
\{\Treb>t\}
\subseteq
\{N(t-\td)-N(\tw)=0\},
\end{equation}
and
\begin{equation}
\Prob(\Treb>t)
\le
\exp\left[-\int_{\tw}^{t-\td}\lambda(s)\dd s\right].
\label{eq:app_sf_bound}
\end{equation}
The single-founder approximation replaces this upper bound by equality; the error arises from cooperative crossing by multiple sub-detectable lineages. For approximately constant \(\lambda\), a rough scale for this correction is
\begin{equation}
\Prob\{N(\td)\ge2\}
=
1-\e^{-\lambda\td}(1+\lambda\td)
=
\frac{(\lambda\td)^2}{2}+O((\lambda\td)^3).
\label{eq:app_coop_scale}
\end{equation}
The eclipse phase is restored by \(\td\to\tdelay=\tecl+\td\).

For constant \(\lambda\), the single-founder rebound time is the shifted exponential
\begin{equation}
\Treb=\tw+\td+E,
\qquad
E\sim{\rm Exp}(\lambda).
\end{equation}
Therefore
\begin{equation}
S_{\rm reb}(t)
=
\begin{cases}
1, & t<\tw+\td,\\[4pt]
\e^{-\lambda(t-\tw-\td)}, & t\ge\tw+\td,
\end{cases}
\label{eq:app_const_survival}
\end{equation}
with density \(f_{\rm reb}(t)=\lambda e^{-\lambda(t-\tw-\td)}\), hazard \(h_{\rm reb}(t)=\lambda\), and quantile
\begin{equation}
Q_p
=
\tw+\td-\frac{1}{\lambda}\log(1-p).
\label{eq:app_const_quantile}
\end{equation}

The mean-field threshold time \(\bar T\), defined by \(\E[V(\bar T)]=\Vdet\), is not a first-passage statistic. For constant \(\lambda\),
\begin{equation}
\bar T
=
\tw+
\frac{1}{r}
\log\left(1+\frac{r\Vdet}{v_0\lambda}\right).
\label{eq:app_meanfield}
\end{equation}
As \(\lambda\to0\),
\[
\bar T-\tw\sim r^{-1}\log(1/\lambda),
\qquad
\E[\Treb]-\tw\sim \lambda^{-1}.
\]
Thus, in the rare-reactivation regime, mean-field crossing can occur much earlier than typical observed rebound.

\section{Explicit Time-Dependent Intensities}
\label{app:washout_detailed}
\label{app:periodic_detailed}
\label{app:cox_detailed}

\subsection*{ART washout}

Let
\begin{equation}
\lambda(t)=\lambda_0\left[1-A_0\e^{-k(t-\tw)}\right],
\qquad
0\le A_0\le1,\quad k=k_{\rm drug}.
\label{eq:app_washout_lambda}
\end{equation}
With \(y=t-\tw-\td\), the shifted cumulative hazard is
\begin{equation}
H_{\rm w}(y)
=
\lambda_0\left[
y-\frac{A_0}{k}\left(1-\e^{-ky}\right)
\right],
\qquad y\ge0.
\label{eq:app_washout_hazard}
\end{equation}
Hence
\begin{equation}
S_{\rm reb}(t)\approx \e^{-H_{\rm w}(y)},
\qquad
f_{\rm reb}(t)
\approx
\lambda_0(1-A_0\e^{-ky})\e^{-H_{\rm w}(y)}\mathbf{1}_{\{y\ge0\}}.
\label{eq:app_washout_surv_density}
\end{equation}
For \(A_0=1\) and \(ky\ll1\),
\[
H_{\rm w}(y)
=
\frac{1}{2}\lambda_0ky^2+O(y^3),
\]
so the early waiting time is approximately Rayleigh:
\begin{equation}
S_{\rm reb}(t)
\approx
\exp\left[-\frac{1}{2}\lambda_0ky^2\right],
\qquad
\E[y]\approx
\sqrt{\frac{\pi}{2\lambda_0k}}.
\label{eq:app_rayleigh}
\end{equation}

The mean waiting contribution is
\begin{equation}
\int_0^\infty \e^{-H_{\rm w}(y)}\dd y
=
e^{c}
\sum_{n=0}^{\infty}
\frac{(-c)^n}{n!}
\frac{1}{\lambda_0+nk},
\qquad
c=\frac{\lambda_0A_0}{k}.
\label{eq:app_washout_mean_series}
\end{equation}
This recovers \(1/\lambda_0\) as \(A_0\to0\) or \(k\to\infty\).

The quantile equation \(H_{\rm w}(y_p)=z_p\), where \(z_p=-\log(1-p)\), can be solved using the Lambert function. For \(A_0>0\),
\begin{equation}
y_p
=
-\frac{1}{k}
\log\left[
-\frac{1}{A_0}
W_0\!\left(
-A_0\exp\left[-A_0-\frac{kz_p}{\lambda_0}\right]
\right)
\right],
\qquad
Q_p=\tw+\td+y_p.
\label{eq:app_washout_quantile}
\end{equation}
For \(A_0=0\), this reduces to \(y_p=z_p/\lambda_0\).

\subsection*{Periodic immune modulation}

Let
\begin{equation}
\lambda(t)=
\lambda_0\left[
1+a\sin(\omega(t-\tw)+\phi)
\right],
\qquad |a|<1.
\label{eq:app_periodic_lambda}
\end{equation}
For \(y=t-\tw-\td\),
\begin{equation}
H_{\rm p}(y)
=
\lambda_0y+
\frac{\lambda_0a}{\omega}
\left[
\cos\phi-\cos(\omega y+\phi)
\right].
\label{eq:app_periodic_hazard}
\end{equation}
Thus
\begin{equation}
S_{\rm reb}(t)\approx\e^{-H_{\rm p}(y)},
\qquad
f_{\rm reb}(t)\approx
\lambda_0[1+a\sin(\omega y+\phi)]\e^{-H_{\rm p}(y)}
\mathbf{1}_{\{y\ge0\}}.
\label{eq:app_periodic_surv_density}
\end{equation}

For small \(|a|\), the quantile admits the expansion
\begin{equation}
Q_p
=
\tw+\td+\frac{z_p}{\lambda_0}
-
\frac{a}{\omega}
\left[
\cos\phi-
\cos\left(\frac{\omega z_p}{\lambda_0}+\phi\right)
\right]
+O(a^2).
\label{eq:app_periodic_quantile}
\end{equation}
The mean waiting contribution is
\begin{equation}
\int_0^\infty S_{\rm reb}(y)\dd y
=
\frac{1}{\lambda_0}
-\frac{a}{\omega}\cos\phi
+
\frac{a\lambda_0}{\omega}
\frac{\lambda_0\cos\phi-\omega\sin\phi}
{\lambda_0^2+\omega^2}
+O(a^2).
\label{eq:app_periodic_mean}
\end{equation}
This shows explicitly how immune phase at interruption advances or delays rebound.

\subsection*{Cox-process activation}

If the reactivation intensity is itself random, then, conditional on a realization of \(\lambda(t)\),
\begin{equation}
S_{\rm reb}(t\mid\lambda)
=
\exp[-\Xi_y],
\qquad
\Xi_y=\int_{\tw}^{\tw+y}\lambda(s)\dd s,
\qquad
y=t-\tw-\td.
\end{equation}
The unconditional survival is therefore
\begin{equation}
S_{\rm reb}(t)=\E[\e^{-\Xi_y}],
\label{eq:app_cox_survival}
\end{equation}
the Laplace transform of the integrated random hazard. If \(\kappa_n^{(\Xi)}(y)\) are the cumulants of \(\Xi_y\), then
\begin{equation}
\log S_{\rm reb}(t)
=
\sum_{n=1}^{\infty}
\frac{(-1)^n}{n!}\kappa_n^{(\Xi)}(y).
\end{equation}
Keeping two cumulants gives
\begin{equation}
S_{\rm reb}(t)
\approx
\exp\left[
-\E[\Xi_y]+\frac{1}{2}\Var(\Xi_y)
\right].
\label{eq:app_cox_cumulant}
\end{equation}
Thus, hazard variability increases survival relative to a deterministic process with the same mean integrated hazard.

For short-correlated fluctuations with
\[
\E[\Xi_y]=\lambda_0y,
\qquad
\Var(\Xi_y)\approx Dy,
\]
one obtains
\begin{equation}
S_{\rm reb}(t)
\approx
\exp[-(\lambda_0-D/2)y],
\qquad
\E[\Treb]
\approx
\tw+\td+\frac{1}{\lambda_0-D/2},
\label{eq:app_cox_mean}
\end{equation}
valid when higher cumulants are negligible and \(D<2\lambda_0\).

A nonnegative intensity may be specified by the lognormal Cox form
\begin{equation}
\lambda(t)=
\lambda_0\exp\left[
\sigma X(t)-\frac{1}{2}\sigma^2
\right],
\end{equation}
where \(X(t)\) has zero mean and unit marginal variance, ensuring \(\E[\lambda(t)]=\lambda_0\).

\section{Reservoir Heterogeneity and Establishment}
\label{app:hetero_competing_detailed}
\label{app:branching_detailed}
\label{app:hetero_mean}

For reservoir class \(j\), let \(\lambda_j(t)\), \(v_j\), and \(r_j\) denote the successful-reactivation intensity, founder output, and net growth rate. The viral load is
\begin{equation}
V(t)
=
\sum_{j=1}^{J}
\sum_{T_{ij}\le t}
v_j\e^{r_j(t-T_{ij})}.
\label{eq:app_hetero_V}
\end{equation}
The corresponding Laplace transform is
\begin{equation}
\Phi_t(q)
=
\exp\left\{
-\sum_{j=1}^{J}
\int_{\tw}^{t}
\lambda_j(s)
\left[
1-\exp\left(-qv_j\e^{r_j(t-s)}\right)
\right]\dd s
\right\}.
\label{eq:app_hetero_laplace}
\end{equation}

In the single-founder approximation, class \(j\) has detection delay
\begin{equation}
\tau_{{\rm d},j}
=
\frac{1}{r_j}
\log\left(\frac{\Vdet}{v_j}\right).
\end{equation}
The class-specific detectable arrival time is \(R_j=T_{1j}+\tau_{{\rm d},j}\), and
\[
\Treb=\min_jR_j.
\]
Assuming independent class-specific reactivation processes,
\begin{equation}
\Prob(\Treb>t)
\approx
\exp\left[
-\sum_{j=1}^{J}
\int_{\tw}^{t-\tau_{{\rm d},j}}
\lambda_j(s)\dd s
\right],
\label{eq:app_hetero_survival}
\end{equation}
where an integral is zero if its upper limit is below \(\tw\). The density is
\begin{equation}
f_{\rm reb}(t)
=
h_{\rm tot}(t)S_{\rm reb}(t),
\qquad
h_{\rm tot}(t)
=
\sum_{j=1}^{J}
\lambda_j(t-\tau_{{\rm d},j})
\mathbf{1}_{\{t\ge\tw+\tau_{{\rm d},j}\}}.
\label{eq:app_hetero_density}
\end{equation}
The probability that class \(j\) seeds detectable rebound is
\begin{equation}
\Prob(J^*=j)
=
\int_{\tw}^{\infty}
\lambda_j(t-\tau_{{\rm d},j})
\mathbf{1}_{\{t\ge\tw+\tau_{{\rm d},j}\}}
S_{\rm reb}(t)\dd t.
\label{eq:app_trigger_prob}
\end{equation}

For constant class-specific rates, the survival becomes
\begin{equation}
S(t)=
\exp\left[
-\sum_{j=1}^{J}
\lambda_j(t-\tw-\tau_{{\rm d},j})_+
\right].
\label{eq:app_hetero_const_survival}
\end{equation}
Ordering the delays as
\[
\tau_{(1)}\le\cdots\le\tau_{(J)}
\]
with corresponding rates \(\lambda_{(j)}\), define
\[
\Lambda_k=\sum_{m=1}^{k}\lambda_{(m)},
\qquad
B_k=\sum_{m=1}^{k}\lambda_{(m)}\tau_{(m)}.
\]
Then
\begin{equation}
\E[\Treb]
=
\tw+\tau_{(1)}
+
\sum_{k=1}^{J-1}
e^{B_k}
\frac{
e^{-\Lambda_k\tau_{(k)}}-
e^{-\Lambda_k\tau_{(k+1)}}
}{\Lambda_k}
+
e^{B_J}
\frac{e^{-\Lambda_J\tau_{(J)}}}{\Lambda_J}.
\label{eq:app_hetero_mean_explicit}
\end{equation}
This finite sum gives the exact single-founder mean for constant heterogeneous classes.

The distinction between raw activation and successful establishment is captured by a birth-death branching process. If an activated lineage gives birth at rate \(b\) and dies at rate \(d\), the extinction probability \(q\) from one founder satisfies
\[
q=\frac{d}{b+d}+\frac{b}{b+d}q^2.
\]
Hence \(q=1\) or \(q=d/b\), and the establishment probability is
\begin{equation}
p_{\rm est}
=
\begin{cases}
1-d/b, & b>d,\\
0, & b\le d.
\end{cases}
\label{eq:app_establishment}
\end{equation}
Therefore the successful-reactivation intensity used in the first-passage theory is
\begin{equation}
\lambda(t)=p_{\rm est}(t)\lambda_{\rm act}(t).
\label{eq:app_successful_lambda}
\end{equation}

\section{Time-Dependent Growth and Cooperative Corrections}
\label{app:time_dependent_growth_detailed}
\label{app:two_founder}

The constant-growth assumption can be relaxed. A lineage seeded at \(s\) under a time-dependent net growth rate \(r(t)\) satisfies
\begin{equation}
v(u;s)
=
v_0
\exp\left[\int_s^u r(a)\dd a\right],
\qquad u\ge s.
\label{eq:app_td_growth_kernel}
\end{equation}
The total viral load is
\begin{equation}
V(t)
=
\int_{\tw}^{t}
v_0
\exp\left[\int_s^t r(a)\dd a\right]
N(\dd s).
\end{equation}
If \(r(t)\ge0\), detectability by time \(t\) is determined by the final value \(v(t;s)\). For general time-dependent growth, however, the correct first-passage condition must use the running maximum. A founder seeded at \(s\) becomes detectable by time \(t\) if
\begin{equation}
\sup_{u\in[s,t]}
\int_s^u r(a)\dd a
\ge
\log\left(\frac{\Vdet}{v_0}\right).
\label{eq:app_general_detectability}
\end{equation}
Therefore, the single-founder survival generalizes to
\begin{equation}
\Prob(\Treb>t)
\approx
\exp\left[
-\int_{\tw}^{t}
\lambda(s)
\mathbf{1}
\left\{
\sup_{u\in[s,t]}
\int_s^u r(a)\dd a
\ge
\log\left(\frac{\Vdet}{v_0}\right)
\right\}
\dd s
\right].
\label{eq:app_time_dependent_survival}
\end{equation}
When \(r(t)=r>0\), the condition reduces to \(s\le t-\td\), recovering the shifted-hazard formula.

The leading cooperative correction arises when a second lineage contributes before the first founder alone reaches threshold. If \(T_1<T_2\), then for \(t\ge T_2\)
\begin{equation}
V(t)
=
v_0e^{r(t-T_1)}
+
v_0e^{r(t-T_2)}
=
v_0e^{r(t-T_2)}
\left(1+e^{r(T_2-T_1)}\right).
\end{equation}
The two-founder crossing time, conditional on both lineages being present, is therefore
\begin{equation}
t_c^{(2)}
=
T_2+
\frac{1}{r}
\log\left[
\frac{\Vdet}
{v_0\left(1+e^{r(T_2-T_1)}\right)}
\right],
\label{eq:app_two_founder_exact}
\end{equation}
provided this occurs before the first-founder crossing \(T_1+\td\). In the equal-age limiting case, \(n\) comparable founders reduce the delay to
\begin{equation}
\td^{(n)}
=
\frac{1}{r}
\log\left(\frac{\Vdet}{nv_0}\right).
\label{eq:app_n_founder_delay}
\end{equation}
These corrections become relevant when multiple successful events occur within one detection-delay window, i.e. when \(\lambda\td\) is not small.


\begin{thebibliography}{999}

\bibitem{Taye2025CM}
M. A. Taye, ``Stochastic modeling of HIV reactivation under ART washout and immune fluctuations,'' \emph{Contemporary Mathematics} \textbf{6}(5), 5708--5739 (2025). \url{https://doi.org/10.37256/cm.6520256801}.

\bibitem{Perelson1993}
A. S. Perelson, D. E. Kirschner, and R. De Boer, ``Dynamics of HIV infection of CD4+ T cells,'' \emph{Mathematical Biosciences} \textbf{114}, 81--125 (1993).

\bibitem{Perelson1996}
A. S. Perelson, A. U. Neumann, M. Markowitz, J. M. Leonard, and D. D. Ho, ``HIV-1 dynamics in vivo: virion clearance rate, infected cell life-span, and viral generation time,'' \emph{Science} \textbf{271}, 1582--1586 (1996).

\bibitem{Ho1995}
D. D. Ho, A. U. Neumann, A. S. Perelson, W. Chen, J. M. Leonard, and M. Markowitz, ``Rapid turnover of plasma virions and CD4 lymphocytes in HIV-1 infection,'' \emph{Nature} \textbf{373}, 123--126 (1995).

\bibitem{Wei1995}
X. Wei, S. K. Ghosh, M. E. Taylor, V. A. Johnson, E. A. Emini, P. Deutsch, et al., ``Viral dynamics in human immunodeficiency virus type 1 infection,'' \emph{Nature} \textbf{373}, 117--122 (1995).

\bibitem{Chun1997}
T. W. Chun, L. Stuyver, S. B. Mizell, L. A. Ehler, J. A. Mican, M. Baseler, et al., ``Presence of an inducible HIV-1 latent reservoir during highly active antiretroviral therapy,'' \emph{Proceedings of the National Academy of Sciences USA} \textbf{94}, 13193--13197 (1997).

\bibitem{Siliciano2003}
J. D. Siliciano, J. Kajdas, D. Finzi, T. C. Quinn, K. Chadwick, J. B. Margolick, et al., ``Long-term follow-up studies confirm the stability of the latent reservoir for HIV-1 in resting CD4+ T cells,'' \emph{Nature Medicine} \textbf{9}, 727--728 (2003).

\bibitem{Murray2016}
A. J. Murray, K. J. Kwon, D. L. Farber, and R. F. Siliciano, ``The latent reservoir for HIV-1: how immunologic memory and clonal expansion contribute to HIV-1 persistence,'' \emph{Journal of Immunology} \textbf{197}, 407--417 (2016).

\bibitem{Li2020}
J. Z. Li, et al., ``Proviruses with identical sequences comprise a large fraction of the replication-competent HIV reservoir,'' \emph{Proceedings of the National Academy of Sciences USA} \textbf{117}, 3886--3893 (2020).

\bibitem{Hill2014}
A. L. Hill, D. I. S. Rosenbloom, F. Fu, M. A. Nowak, and R. F. Siliciano, ``Predicting the outcomes of treatment to eradicate the latent reservoir for HIV-1,'' \emph{Proceedings of the National Academy of Sciences USA} \textbf{111}, 13475--13480 (2014).

\bibitem{Hill2018}
A. L. Hill, D. I. S. Rosenbloom, E. Goldberg, E. Hanhauser, D. R. Kuritzkes, R. F. Siliciano, et al., ``Real-time predictions of reservoir size and rebound time during antiretroviral therapy interruption trials for HIV,'' \emph{PLoS Pathogens} \textbf{14}, e1007333 (2018).

\bibitem{pinkevych2015latency}
M. Pinkevych, D. Cromer, M. Tolstrup, A. J. Grimm, D. A. Cooper, S. R. Lewin, O. S. S{\o}gaard, T. A. Rasmussen, S. J. Kent, A. D. Kelleher, and M. P. Davenport, ``HIV reactivation from latency after treatment interruption occurs on average every 5--8 days: implications for HIV remission,'' \emph{PLoS Pathogens} \textbf{11}(7), e1005000 (2015). doi:10.1371/journal.ppat.1005000.

\bibitem{conway2019rebound}
J. M. Conway, A. S. Perelson, and J. Z. Li, ``Predictions of time to HIV viral rebound following ART suspension that incorporate personal biomarkers,'' \emph{PLoS Computational Biology} \textbf{15}(7), e1007229 (2019). doi:10.1371/journal.pcbi.1007229.

\bibitem{NowakBangham1996}
M. A. Nowak and C. R. M. Bangham, ``Population dynamics of immune responses to persistent viruses,'' \emph{Science} \textbf{272}, 74--79 (1996).

\bibitem{Perelson2002}
A. S. Perelson, ``Modelling viral and immune system dynamics,'' \emph{Nature Reviews Immunology} \textbf{2}, 28--36 (2002).

\bibitem{Wodarz2002}
D. Wodarz and M. A. Nowak, ``Mathematical models of HIV pathogenesis and treatment,'' \emph{BioEssays} \textbf{24}, 1178--1187 (2002).

\bibitem{gunst2025ati}
J. D. Gunst, J. Gohil, J. Z. Li, R. J. Bosch, et al., ``Time to HIV viral rebound and frequency of post-treatment control after analytical interruption of antiretroviral therapy: an individual data-based meta-analysis of 24 prospective studies,'' \emph{Nature Communications} \textbf{16}, 906 (2025). doi:10.1038/s41467-025-56116-1.

\bibitem{fennessey2017barcoded}
C. M. Fennessey, M. Pinkevych, T. T. Immonen, A. Reynaldi, V. Venturi, P. Nadella, C. Reid, L. Newman, L. Lipkey, K. Oswald, et al., ``Genetically-barcoded SIV facilitates enumeration of rebound variants and estimation of reactivation rates in nonhuman primates following interruption of suppressive antiretroviral therapy,'' \emph{PLoS Pathogens} \textbf{13}(5), e1006359 (2017). doi:10.1371/journal.ppat.1006359.

\bibitem{VanDorp2020}
C. H. Van Dorp, J. M. Conway, D. H. Barouch, J. B. Whitney, and A. S. Perelson, ``Models of SIV rebound after treatment interruption that involve multiple reactivation events,'' \emph{PLoS Computational Biology} \textbf{16}, e1008241 (2020).

\bibitem{li2016reservoir}
J. Z. Li, B. Etemad, H. Ahmed, E. Aga, R. J. Bosch, J. W. Mellors, D. R. Kuritzkes, M. M. Lederman, M. Para, and R. T. Gandhi, ``The size of the expressed HIV reservoir predicts timing of viral rebound after treatment interruption,'' \emph{AIDS} \textbf{30}(3), 343--353 (2016). doi:10.1097/QAD.0000000000000953.

\bibitem{pasternak2020carna}
A. O. Pasternak, S. DeMaster, et al., ``Cell-associated HIV-1 RNA predicts viral rebound and disease progression after discontinuation of temporary early ART,'' \emph{JCI Insight} \textbf{5}(6), e134196 (2020). doi:10.1172/jci.insight.134196.

\bibitem{sneller2020kinetics}
M. C. Sneller, E. D. Huiting, K. E. Clarridge, C. Seamon, J. Blazkova, J. S. Justement, et al., ``Kinetics of plasma HIV rebound in the era of modern antiretroviral therapy,'' \emph{Journal of Infectious Diseases} \textbf{222}(10), 1655--1659 (2020). doi:10.1093/infdis/jiaa270.

\bibitem{Pinkevych2019}
M. Pinkevych, D. Cromer, M. P. Davenport, et al., ``Modeling of experimental data supports HIV reactivation from latency after treatment interruption at high but variable rates,'' \emph{eLife} \textbf{8}, e49022 (2019).

\bibitem{Wu2020}
Y. Wu, M. Pinkevych, Z. Xu, B. F. Keele, M. P. Davenport, and D. Cromer, ``Impact of fluctuation in frequency of human immunodeficiency virus/simian immunodeficiency virus reactivation during antiretroviral therapy interruption,'' \emph{Proceedings of the Royal Society B} \textbf{287}, 20200354 (2020).

\bibitem{Kingman1993}
J. F. C. Kingman, \emph{Poisson Processes} (Oxford University Press, Oxford, 1993).

\bibitem{Feller1971}
W. Feller, \emph{An Introduction to Probability Theory and Its Applications, Vol. II}, 2nd ed. (Wiley, New York, 1971).

\bibitem{CoxMiller1965}
D. R. Cox and H. D. Miller, \emph{The Theory of Stochastic Processes} (Chapman and Hall, London, 1965).


\bibitem{Taye2026StochasticFirstPassage}
M. A. Taye, \textit{Stochastic First-Passage Theory of HIV Viral Rebound Following Latent Reservoir Reactivation}, arXiv:2607.04910 (2026).

\bibitem{Taye2026BiologicalTimeEvolutionary}
M. A. Taye, \textit{Biological Time, Evolutionary Optimization, and Gauge Coherence: A Thermodynamic Synthesis of the Principle of Biological Time Equivalence}, arXiv:2607.04827 (2026).

\bibitem{Taye2026RelativisticPbteBiological}
M. A. Taye, \textit{Relativistic PBTE: Biological Proper Time Along the Worldline}, arXiv:2607.04849 (2026).

\bibitem{Taye2026NonequilibriumInternalTime}
M. Taye, \textit{A Nonequilibrium Internal-Time Model of Aging: Entropy-Normalized Biological Proper Time and Repair Bifurcations}, arXiv:2606.23279 (2026).

\bibitem{Taye2026BiologicalProperTime}
M. A. Taye, Int. J. Sci. Res. Publ. \textbf{16}, 17417 (2026).

\bibitem{Taye2026PrincipleBiologicalTime}
M. A. Taye, \textit{The Principle of Biological Time Equivalence:} (2026) [Metadata incomplete in Google Scholar export].

\bibitem{Taye2026NonequilibriumThermodynamicsStochastic}
M. A. Taye, \textit{Nonequilibrium Thermodynamics in Stochastic Processes} (2026) [Metadata incomplete in Google Scholar export].

\bibitem{Taye2026BrownianMotorsBrownian}
M. A. Taye, \textit{Brownian Motors and Brownian Heat Engines: From Classical Thermodynamics to Fluctuation-Driven Machines} (Independently published, 2026), 539 pp..

\bibitem{Taye2026ExactThermodynamicAnalysis}
M. A. Taye, \textit{Exact Thermodynamic Analysis of a Hybrid Molecular Motor Switching Between Active and Passive Modes} (2026) [Metadata incomplete in Google Scholar export].

\bibitem{Taye2026NoiseActivatedDopant}
M. A. Taye, \textit{Noise-Activated Dopant Dynamics in Two-Dimensional Thermal Landscapes with Localized Cold Spots} (2026) [International Journal of Scientific and Research Publications (IJSRP) 16 (5)].

\bibitem{Taye2026UniversalThermodynamicInequality}
M. A. Taye, \textit{A Universal Thermodynamic Inequality: Scaling Relations Between Current, Activity, and Entropy Production} (2026) [Metadata incomplete in Google Scholar export].

\bibitem{Taye2026ThermodynamicParametrisationVertebrate}
M. A. Taye, Int. J. Sci. Res. Publ. \textbf{16}, 2250 (2026).

\bibitem{Taye2026NeuralInvestmentEntropy}
M. Taye, \textit{Neural Investment as an Entropy-Budget Strategy: A Thermodynamic Derivation of Primate Longevity from the Principle of Biological Time Equivalence}, arXiv:2604.27937 (2026).

\bibitem{Taye2026LifetimeCardiacCycle}
M. Taye, \textit{The Lifetime Cardiac-Cycle Invariant in Endothermic Vertebrates: A 230-Species Comparative Dataset, Statistical Validation, and Explicit Falsifiability Criteria}, arXiv:2604.27856 (2026).

\bibitem{Taye2026BiologicalTimeEquivalence}
M. Taye, \textit{Biological Time Equivalence in Vertebrates: Thermodynamic Framework, Comparative Tests, and Clade-Specific Deviations}, arXiv:2603.26377 (2026).

\bibitem{Taye2026EntropyProductionMacroscopic}
M. A. Taye, Mod. Math. Phys. \textbf{2}, 1 (2026).

\bibitem{Taye2025CompetingActivePassive}
M. A. Taye, Physica A, 131214 (2025).

\bibitem{Taye2025CurzonAhlbornType}
M. A. Taye, Phys. Rev. E \textbf{112}, 044122 (2025).

\bibitem{Taye2025ThermodynamicIrreversibilityUnderdamped}
M. A. Taye, Phys. Rev. E \textbf{112}, 044101 (2025).

\bibitem{Taye2025UnifiedNonequilibriumFramework}
M. Taye, \textit{A Unified Nonequilibrium Framework: Thermodynamic Distance, Dissipation, and Stationary Laws via Effective State Count, Variational Stationarity, and Thermodynamic Bounds}, arXiv:2509.09041 (2025).

\bibitem{taye2025EntropyProductionThermodynamic}
M. A. Taye, Contemp. Math. \textbf{6}, 4101 (2025).

\bibitem{Taye2025ThermodynamicFeaturesHeat}
M. Taye, \textit{Thermodynamic Features of a Heat Engine Coupled with Exponentially Decreasing Temperature Across the Reaction Coordinate, as well as Perspectives on Nonequilibrium Thermodynamics}, arXiv:2503.24317 (2025).

\bibitem{Taye2025DrugWashoutViral}
M. A. Taye, bioRxiv, 2025.03. 22.644757 (2025).

\bibitem{Taye2025ThermodynamicRelationsTime}
M. Taye, arXiv e-prints, arXiv: 2503.20812 (2025).

\bibitem{Taye2025EntropyProductionThermodynamic}
M. A. Taye, \textit{Entropy Production and Thermodynamic Dynamics in Active and Passive Brownian Systems Driven by Time-Dependent Forces and Temperatures} (2025) [Metadata incomplete in Google Scholar export].

\bibitem{Taye2025DirectedTransportShort}
M. A. Taye, \textit{Directed Transport of a Short Polymer Chain on a Temperature-Dependent Ratchet Potential} (2025) [Metadata incomplete in Google Scholar export].

\bibitem{Taye2025ThermodynamicFeaturesHeat2}
M. A. Taye, \textit{Thermodynamic Features of a Heat Engine with an Exponentially Decreasing Temperature Profile} (2025) [Metadata incomplete in Google Scholar export].

\bibitem{Taye2024ExactTimeDependent}
M. A. Taye, Phys. Rev. E \textbf{110}, 054105 (2024).

\bibitem{Taye2024ExactTimeDependent2}
M. A. Taye, Contemp. Math., 5113-5149 (2024).

\bibitem{Taye2023TimeDependentThermodynamic}
M. A. Taye, bioRxiv, 2023.12. 06.570486 (2023).

\bibitem{Mahmud2023ComputationalInvestigationCis}
M. Mahmud, M. Bekele, and N. Behera, Theory Biosci. \textbf{142}, 151-165 (2023).

\bibitem{Ashagre2023SynergisticContributionPotassium}
S. Ashagre, A. K. Ogundele, J. N. Ike, B. Gebremichael, M. Bekele, G. D. Sharma, and ..., Journal of Physics and Chemistry of Solids 177, 111290 (2023).

\bibitem{Taye2023TimeDependentSolutions}
M. A. Taye, Eur. Phys. J. B \textbf{96}, 65 (2023).

\bibitem{Taye2023TransportingShortPolymer}
M. Taye, \textit{Transporting a short polymer along a reaction coordinate that coupled with a spatially varying temperature} (2023) [Metadata incomplete in Google Scholar export].

\bibitem{Taye2023DynamicsBloodCells}
M. A. Taye, bioRxiv, 2023.01. 21.525013 (2023).

\bibitem{Taye2023HostViralLoad}
M. A. Taye, Contemp. Math \textbf{4}, 392-410 (2023).

\bibitem{Mahmud2022ThermalQuantumFluctuation}
M. Mahmud, M. Bekele, and Y. Bassie, Condens. Matter \textbf{7}, 62 (2022).

\bibitem{Taye2022ExactTimeDependent}
M. Taye, \textit{Exact time-dependent analytical solutions for entropy production rate for a system that operates in a heat bath where its temperature varies linearly in space}, arXiv:2205.10322 (2022).

\bibitem{Taye2022ExactTimeDependent2}
M. A. Taye, Phys. Rev. E \textbf{105}, 054126 (2022).

\bibitem{Aragie2022NoiseFormedTriple}
B. Aragie, M. Bekele, and G. Pellicane, Pramana \textbf{96}, 59 (2022).

\bibitem{Abebe2022ThermallyActivatedDiffusion}
Y. Abebe, T. Birhanu, L. Demeyu, M. Taye, M. Bekele, and Y. Bassie, Eur. Phys. J. B \textbf{95}, 9 (2022).

\bibitem{Birhanu2021StochasticResonatorLayered}
T. Birhanu, Y. Abebe, L. Demeyu, M. Taye, and M. Bekele, Int. J. Mod. Phys. B \textbf{35}, 2150284 (2021).

\bibitem{Taye2021BrownianMotorsArranged}
M. A. Taye, Eur. Phys. J. B \textbf{94}, 124 (2021).

\bibitem{Taye2021EffectViscousFriction}
M. A. Taye, Phys. Rev. E \textbf{103}, 042132 (2021).

\bibitem{Zhang2021EfficacyAntiviralDrug}
S. Y. Zhang and M. A. Taye, \textit{The efficacy of antiviral drug, HIV viral load and the immune response}, arXiv:2101.10413 (2021).

\bibitem{Zhang2020HivViralLoad}
S. Y. Zhang and M. A. Taye, \textit{HIV viral load and the efficacy of antiviral drug} (2020) [Metadata incomplete in Google Scholar export].

\bibitem{Taye2020CorrelationBetweenAntiviral}
M. A. Taye, bioRxiv, 2020.11. 06.372094 (2020).

\bibitem{Taye2020SedimentationRateErythrocyte}
M. A. Taye, Eur. Phys. J. E \textbf{43}, 19 (2020).

\bibitem{Taye2020EntropyProductionEntropy}
M. A. Taye, Phys. Rev. E \textbf{101}, 012131 (2020).

\bibitem{Taye2019PhysicsErythrocyteSedimentation}
M. A. Taye, \textit{The physics of Erythrocyte Sedimentation Rate}, arXiv:1907.12148 (2019).

\bibitem{Duki2018StochasticResonanceFirst}
S. F. Duki and M. A. Taye, \textit{Stochastic resonance and first passage time for excitable system exposed to underdamped medium}, arXiv:1809.10017 (2018).

\bibitem{Duki2018StochasticResonanceFirst2}
S. F. Duki and M. A. Taye, J. Stat. Phys. \textbf{171}, 878-896 (2018).

\bibitem{Taye2017IrreversibleBrownianHeat}
M. A. Taye, J. Stat. Phys. \textbf{169}, 423-440 (2017).

\bibitem{Duki2016FirstPassageTime}
S. F. Duki and M. A. Taye, \textit{First passage time and stochastic resonance of excitable systems}, arXiv:1609.07752 (2016).

\bibitem{Taye2016FreeEnergyEntropy}
M. A. Taye, Phys. Rev. E \textbf{94}, 032111 (2016).

\bibitem{Taye2015EffectTemperatureDependence}
M. A. Taye and S. F. Duki, Eur. Phys. J. B \textbf{88}, 322 (2015).

\bibitem{Taye2015ExactAnalyticalThermodynamic}
M. A. Taye, Phys. Rev. E \textbf{92}, 032126 (2015).

\bibitem{Taye2015RectifiedMotionShort}
M. A. Taye, \textit{Rectified motion of short polymer chain that walks along a ratchet potential that coupled with spatially varying temperature}, arXiv:1507.04945 (2015).

\bibitem{Duki2015EffectTemperatureViscous}
S. F. Duki and M. A. Taye, \textit{The effect of temperature on viscous friction and the performance of a Brownian heat engine}, arXiv:1507.02005 (2015).

\bibitem{Taye2015ExactAnalyticalExpressions}
M. A. Taye, \textit{Exact analytical expressions for entropy production and free energy}, arXiv:1507.01791 (2015).

\bibitem{Aragie2014ImpurityDiffusionHarmonic}
B. Aragie, M. Asfaw, L. Demeyu, and M. Bekele, Eur. Phys. J. B \textbf{87}, 214 (2014).

\bibitem{Asfaw2014ThermallyActivatedBarrier}
M. Asfaw, \textit{Thermally activated barrier crossing rate for a coupled system moving in a ratchet potential}, arXiv:1407.3713 (2014).

\bibitem{Asfaw2014ThermodynamicFeatureBrownian}
M. Asfaw, Phys. Rev. E \textbf{89}, 012143 (2014).

\bibitem{Asfaw2013CorrectionTimingStatistics}
M. Asfaw, E. Alvarez-Lacalle, and Y. Shiferaw, PLoS ONE \textbf{8}, 10.1371/annotation/10d4ef64-c7e6-43ff-8bd7-658d47689855 (2013).

\bibitem{Asfaw2013TimingStatisticsSpontaneous}
M. Asfaw, E. Alvarez-Lacalle, and Y. Shiferaw, PLoS ONE \textbf{8}, e62967 (2013).

\bibitem{Asfaw2013EffectThermalInhomogeneity}
M. Asfaw, Eur. Phys. J. B \textbf{86}, 189 (2013).

\bibitem{Chen2012StatisticsCalciumMediated}
W. Chen, M. Asfaw, and Y. Shiferaw, Biophys. J. \textbf{102}, 461-471 (2012).

\bibitem{Shiferaw2012CalciumMediatedTriggered}
Y. Shiferaw, W. Chen, and M. Asfaw, Biophys. J. \textbf{102}, 672a (2012).

\bibitem{Asfaw2012ExploringDynamicsDimer}
M. Asfaw and Y. Shiferaw, \textit{Exploring the dynamics of dimer crossing over a Kramers type potential} (2012) [The Journal of Chemical Physics 136 (2)].

\bibitem{Asfaw2011ExploringElasticFeatures}
M. Asfaw, J. Phys.: Condens. Matter \textbf{23}, 105101 (2011).

\bibitem{Asfaw2011NoiseCreatedBistability}
M. Asfaw, B. Aragie, and M. Bekele, Eur. Phys. J. B \textbf{79}, 371-376 (2011).

\bibitem{Asfaw2011LateralPhaseSeparation}
M. Asfaw, Int. J. Mod. Phys. B \textbf{25}, 457-466 (2011).

\bibitem{Asfaw2010ThermallyActivatedBarrier}
M. Asfaw, Physical Review E—Statistical, Nonlinear, and Soft Matter Physics 82 (2 (2010).

\bibitem{Asfaw2010StochasticResonanceFlexible}
M. Asfaw and W. Sung, EPL \textbf{90}, 30008 (2010).

\bibitem{Bekele2009EvaluationClimateChange}
H. M. Bekele, \textit{Evaluation of climate change impact on upper blue Nile Basin reservoirs} (2009) [Arba Minch University 109].

\bibitem{Asfaw2008AdhesionInducedLateral}
M. Asfaw and H. Y. Chen, \textit{Adhesion-induced lateral phase separation of multi-component membranes: the effect of repellers and confinement}, arXiv:0811.3725 (2008).

\bibitem{Asfaw2008ModelingEfficientBrownian}
M. Asfaw, Eur. Phys. J. B \textbf{65}, 109-116 (2008).

\bibitem{Getahun2008CompetingJumpCycles}
Z. Getahun, M. Asfaw, and M. Bekele, \textit{Competing jump cycles for vacancy diffusion in binary alloys}, arXiv:0807.5034 (2008).

\bibitem{Asfaw2008UnbindingTransitionsMembranes}
M. Asfaw, Physica A \textbf{387}, 3526-3536 (2008).

\bibitem{Asfaw2007AdhesionInducedLateral}
M. Asfaw, Bussei Kenky\={u} \textbf{89}, 12 (2007).

\bibitem{Demeyu2007MonteCarloSimulations}
L. Demeyu, S. Stafström, and M. Bekele, Phys. Rev. B \textbf{76}, 155202 (2007).

\bibitem{Asfaw2007ExploringOperationTiny}
M. Asfaw and M. Bekele, Physica A \textbf{384}, 346-358 (2007).

\bibitem{Asfaw2007UnbindingTransitionsMulticomponent}
M. Asfaw, \textit{Unbinding transitions of multicomponent membranes and strings}, arXiv:0706.3433 (2007).

\bibitem{Asfaw2006MembraneAdhesionVia}
M. Asfaw, B. Różycki, R. Lipowsky, and T. R. Weikl, EPL \textbf{76}, 703-709 (2006).

\bibitem{Asfaw2005EnergeticsSimpleMicroscopic}
M. Asfaw and M. Bekele, Physical Review E—Statistical, Nonlinear, and Soft Matter Physics 72 (5 (2005).

\bibitem{Asfaw2005AdhesionMultiComponent}
M. Asfaw, \textit{Adhesion of multi-component membranes and strings} (2005).

\bibitem{Asfaw2004CurrentMaximumPower}
M. Asfaw and M. Bekele, \textit{Current, maximum power and optimized efficiency of a Brownian heat engine} (2004) [Metadata incomplete in Google Scholar export].

\bibitem{Pearson2003ManagementHealthReproduction}
R. A. Pearson, M. Alemayehu, A. Tesfaye, D. G. Smith, G. Kebede, and M. Asfaw, \textit{Management, health and reproduction of donkeys used for work in peri-urban areas of West and East Shewa, Ethiopia: a survey} (2003).

\bibitem{PEARSON2003UseManagementDonkeys}
R. A. PEARSON, M. ALEMAYEHU, A. TESFAYE, E. F. ALLAN, D. G. SMITH, and ..., \textit{Use and management of donkeys in peri-urban areas of Ethiopia. Phase One. 133 pp.} (2003) [Metadata incomplete in Google Scholar export].

\bibitem{Asfaw2002AdjustableBrownianHeat}
M. Asfaw and M. Bekele, \textit{An adjustable Brownian heat engine}, arXiv:preprint (2002).

\end{thebibliography}
\end{document}